\newcommand{\mb}{\mathbf}
\newcommand{\E}{\mathbb{E}}
\newcommand{\V}{\mathbb{V}\mathrm{ar}}
\newcommand{\C}{\mathbb{C}\mathrm{ov}}
\newtheorem{ass}{Assumption}
\newtheorem{thm}{Theorem}
\newtheorem{prop}{Proposition}
\newtheorem{cor}{Corollary}
\newtheorem{lem}{Lemma}
\newcommand{\norm}[1]{\left\lVert #1 \right\rVert}
\begin{document}

\title{Control Variates for Stochastic Gradient MCMC
}


\author{Jack Baker$^1$\footnote{email: \texttt{j.baker1@lancaster.ac.uk}}\hskip 2em 
        Paul Fearnhead$^1$\hskip 2em 
        Emily B. Fox$^2$\hskip 2em 
        Christopher Nemeth$^1$}
\affil{$^1$ STOR-i Centre for Doctoral Training, Department of Mathematics and Statistics, Lancaster University, Lancaster, UK \\ $^2$ Department of Statistics, University of Washington, Seattle, WA}

\date{}

\maketitle

\begin{abstract}
It is well known that Markov chain Monte Carlo (MCMC) methods scale poorly with dataset size. A popular class of methods for solving this issue is stochastic gradient MCMC. These methods use a noisy estimate of the gradient of the log posterior, which reduces the per iteration computational cost of the algorithm. Despite this, there are a number of results suggesting that stochastic gradient Langevin dynamics (SGLD), probably the most popular of these methods, still has computational cost proportional to the dataset size. We suggest an alternative log posterior gradient estimate for stochastic gradient MCMC, which uses control variates to reduce the variance. We analyse SGLD using this gradient estimate, and show that, under log-concavity assumptions on the target distribution, the computational cost required for a given level of accuracy is independent of the dataset size. Next we show that a different control variate technique, known as zero variance control variates can be applied to SGMCMC algorithms for free. This post-processing step improves the inference of the algorithm by reducing the variance of the MCMC output. Zero variance control variates rely on the gradient of the log posterior; we explore how the variance reduction is affected by replacing this with the noisy gradient estimate calculated by SGMCMC.

\end{abstract}

\noindent\textbf{Keywords:} Stochastic gradient MCMC; Langevin dynamics; scalable MCMC; control variates; computational cost; big data

\section{Introduction}
\label{sec:intro}

Markov chain Monte Carlo (MCMC), one of the most popular methods for Bayesian inference, scales poorly with dataset size. This is because standard methods require the whole dataset to be evaluated at each iteration. Stochastic gradient MCMC (SGMCMC) are a class of MCMC algorithms that aim to account for this issue. The algorithms have recently gained popularity in the machine learning literature, though they were originally proposed in \cite{Lamberton2002}. These methods use efficient MCMC proposals based on discretised dynamics that use gradients of the log posterior. They reduce the computational cost by replacing the gradients with an unbiased estimate which uses only a subset of the data, referred to as a minibatch. They also bypass the acceptance step by making small discretisation steps \citep{{Welling2011, Chen2014, Ding2014, Dubey2016}}. 

These new algorithms have been successfully applied to a range of state of the art machine learning problems \citep[e.g.][]{Patterson2013, Li2016}. There is a variety of software available implementing these methods \citep{{Tran2016, Baker2017}}. In particular \cite{Baker2017} implements the control variate methodology we discuss in this article.

This paper investigates stochastic gradient Langevin dynamics (SGLD), a popular SGMCMC algorithm that discretises the Langevin diffusion. There are a number of results suggesting that while SGLD has a lower per iteration computational cost compared with MCMC, its overall computational cost is proportional to the dataset size \citep{{Welling2011, Nagapetyan2017}}. This motivates improving the computational cost of SGLD, which can be done by using control variates \citep{Ripley2009}. Control variates can be applied to reduce the Monte Carlo variance of the gradient estimate in stochastic gradient MCMC algorithms. We refer to SGLD using this new control variate gradient estimate as SGLD-CV.

We analyse the algorithm using the Wasserstein distance between the distribution defined by SGLD-CV and the true posterior distribution, by adapting recent results by \cite{Dalalyan2017}. Our results are based on assuming the target distribution is strongly log-concave. We get bounds on the Wasserstein distance between the target distribution and the distribution we sample from at a given step of SGLD-CV. These bounds are in terms of the tuning constants chosen when implementing SGLD-CV. By making assumptions on how the posterior distribution changes with the number of data points, we are able to show that the computational cost required for a given level of accuracy does not grow with the dataset size. Though this is providing we can obtain an estimate, $\hat \theta$, of the posterior mode that is sufficiently close. Our results also show the impact on the computational cost of using a poor $\hat{\theta}$ value.

The algorithm requires some additional preprocessing steps before the computational cost benefits come into effect. These preprocessing steps include finding $\hat \theta$ and calculating the full log posterior gradient at $\hat \theta$. Both of these steps have a cost that is linear in the amount of data. However, the cost of finding $\hat \theta$ essentially replaces the burn-in of the chain, and we find empirically that this is often more efficient. The cost of these steps is analysed in more detail in the article.

The use of control variates has also been shown to be important for other Monte Carlo algorithms for simulating from a posterior with a cost that is sub-linear in the number of data points \citep{{Bardenet2016, Bierkens2016, Pollock2016, Nagapetyan2017}}. For previous work that suggests using control variates within SGLD see \cite{Dubey2016} and  \cite{Chen2017}. These latter papers, whilst showing benefits of using control variates, do not show that the resulting algorithm can have sub-linear cost in the number of data points. A recent paper, \cite{Nagapetyan2017}, does investigate how SGLD-CV peforms in the limit as you have more data under similar log-concavity assumptions on the posterior distribution. They have results that are qualitatively similar to ours, including the sub-linear computational cost of SGLD-CV. Though they measure accuracy of the algorithm through the mean square error of Monte Carlo averages rather than through the Wasserstein distance.

Not only can control variates be used to speed up stochastic gradient MCMC by enabling smaller minibatches to be used; we show that they can be used to improve the inferences made from the MCMC output. In particular, we can use post-processing control variates \citep{{Mira2013,Papamarkou2014,Friel2016}} to produce MCMC samples with a reduced variance. The post-processing methods rely on the MCMC output as well as gradient information. Since stochastic gradient MCMC methods already compute estimates of the gradient, we explore replacing the true gradient in the post-processing step with these free estimates. We also show theoretically how this affects the variance reduction factor; and empirically demonstrate the variance reduction that can be achieved from using these post-processing methods.

\section{Stochastic gradient MCMC}
\label{sec:review}

Throughout this paper we aim to make inference on a vector of parameters $\theta \in \mathbb R^d$, with data $\mb x = \{x_i\}_{i=1}^N$. We denote the probability density of $x_i$ as $p(x_i | \theta)$ and assign a prior density $p(\theta)$. The resulting posterior density is then $p( \theta | \mb x ) \propto p( \theta ) \prod_{i=1}^N p( x_i | \theta )$, which defines the posterior distribution $\pi$. For brevity we write $f_i(\theta) = - \log p(x_i | \theta)$ for $i = 1, \dots N$, $f_0(\theta) = - \log p(\theta)$ and $f(\theta) = - \log p(\theta | \mb x)$.

Many MCMC algorithms are based upon discrete-time approximations to continuous-time dynamics, such as the Langevin diffusion, that are known to have the posterior as their invariant distribution. The approximate discrete-time dynamics are then used as a proposal distribution within a Metropolis-Hastings algorithm. The accept-reject step within such an algorithm corrects for any errors in the discrete-time dynamics. Examples of such an approach include the Metropolis-adjusted Langevin algorithm (MALA; see e.g. \cite{Roberts1998}) and Hamiltonian Monte Carlo (HMC; see \cite{Neal2010}).

\subsection{Stochastic gradient Langevin dynamics}
\label{sec:sgld-review}

SGLD, first introduced by \cite{Lamberton2002}, and popularised more recently by \cite{Welling2011}, is a minibatch version of the Metropolis-adjusted Langevin algorithm. At each iteration it creates an approximation of the true gradient of the log-posterior by using a small sample of data.

{
The SGLD algorithm is based upon the discretisation of a stochastic differential equation known as the Langevin diffusion. A Langevin diffusion for a parameter vector $\theta$ with posterior $p(\theta | \mb x) \propto \exp(-f(\theta))$ is given by 
\begin{equation}
  \label{eq:langevin}
    \theta_t =  \theta_0 - \int_0^t \nabla f(\theta_s) ds + \sqrt 2 dB_t,  
\end{equation}
where $B_t$ is a $d$-dimensional Wiener process. The stationary distribution of this diffusion is $\pi$. This means that it will target the posterior exactly, but in practice we need to discretize the dynamics to simulate from it, which introduces error. A bottleneck for this simulation is that calculating $\nabla f( \theta)$ is an $O(N)$ operation. So to get around this, \cite{Welling2011} replace the log posterior gradient with the following unbiased estimate
\begin{equation}
    \nabla \hat f(\theta) := \nabla f_0(\theta) + \frac{N}{n} \sum_{i \in S_k} \nabla f_i(\theta)
    \label{eq:std-grad}
\end{equation}
for some subsample $S_k$ of $\{1,\dots,N\}$, with $|S_k| = n$. A single update of SGLD is then
\begin{equation}
    \theta_{k+1} = \theta_k - \frac{h_k}{2} \nabla \hat f(\theta_k) + \zeta_k,
\end{equation}
where $\zeta_k \sim N(0,h_k)$.
}

MALA uses a Metropolis-Hastings accept-reject step to correct for the discretisation of the Langevin process. \citet{Welling2011} bypass this acceptance step, as it requires calculating $p( \theta | \mb x )$ using the full dataset, and instead use an adaptive rather than fixed stepsize, {where $h_k \rightarrow 0$ as $k \rightarrow \infty$}. The motivation is that the noise in the gradient estimate disappears faster than the process noise, so eventually, the algorithm will sample the posterior approximately. In practice, we found the algorithm does not mix well when the stepsize is decreased to zero, so in general a fixed small stepsize $h$ is used in practice, as suggested by \citet{Vollmer2015}.

\section{Control variates for SGLD efficiency}
\label{sec:eff}

The SGLD algorithm has a reduced per iteration computational cost compared to traditional MCMC algorithms. However, there have been a number of results suggesting that the overall computational cost of SGLD is still $O(N)$ \citep{{Welling2011, Nagapetyan2017}}. The main reason for this result is that in order to control the variance in the gradient estimate, $\nabla \hat f(\theta)$, we need $n$ to increase linearly with $N$. So intuition would suggest trying to reduce this variance would reduce the computational cost of the algorithm.

A natural choice is to reduce this variance through control variates \citep{Ripley2009}. Control variates applied to SGLD have also been investigated by \citet{Dubey2016} and \cite{Chen2017}, who show that the convergence bound of SGLD is reduced when they are used. Theoretical results, similar to ours below, on how the use of control variates can improve how the computational cost of SGLD scales with $N$ are given in  \cite{Nagapetyan2017}. 

In Section \ref{sec:contr-vari-grad}, we show how control variates can be used to reduce the variance in the gradient estimate in SGLD, leading to the algorithm SGLD-CV. Then in Section \ref{sec:comp-cost-sgld} we analyse the Wasserstein distance between the distribution defined by SGLD-CV and the true posterior. There are a number of quantities that affect the performance of SGLD-CV, including the stepsize $h$, the number of iterations $K$ and the minibatch size $n$. We provide sufficient conditions on $h$, $K$ and $n$ in order to bound the Wasserstein distance. We show under certain assumptions, the computational cost, measured as $Kn$, required to bound the Wasserstein distance is independent of $N$.

\subsection{Control variates for SGMCMC}
\label{sec:contr-vari-grad}

Let $\hat\theta$ be a fixed value of the parameter, chosen to be close to the mode of the posterior $p( \theta | \mb x )$. The log posterior gradient can then be re-written as
\[
\nabla f(\theta) = \nabla f( \hat\theta ) + [\nabla f(\theta) - \nabla f(\hat \theta) ],
\]
where the first term on the right-hand side is a constant and the bracketed term on the right-hand side can be unbiasedly estimated by
\[
    \left[\nabla \hat f(\theta) - \nabla \hat f(\hat \theta)\right] = \nabla f_0(\theta) - \nabla f_0(\hat \theta) +  \frac{1}{n} \sum_{i \in S}\frac{1}{p_i} \left[ \nabla f_i(\theta) - \nabla f_i(\hat \theta) \right]
\]
where $p_1,\ldots,p_N$ are user-chosen, strictly positive probabilities, $S$ is a random sample from $\{1,\ldots,N\}$ such that $|S| = n$ and the expected number of times $i$ is sampled is $np_i$. The standard implementation of control variates would set $p_i=1/N$ for all $i$. Yet we show below that there can be advantages in having these probabilities vary with $i$; for example to give higher probabilities to sampling data points for which $\nabla f_i(\theta) - \nabla f_i(\hat \theta)$ has higher variability.

If the gradient of the likelihood for a single observation is smooth in $\theta$ then we will have
\[
    \nabla f_i(\theta) \approx \nabla f_i(\hat \theta) \ \ \mbox{if} \ \  \theta\approx\hat\theta.
\]
Hence for $\theta\approx\hat\theta$ we would expect the unbiased estimator
\begin{equation}
    \label{eq:eff-grad}
    \nabla \tilde f(\theta) = \nabla f( \hat \theta ) + [\nabla \hat f(\theta) - \nabla \hat f(\hat \theta) ],
\end{equation}
to have a lower variance than the simpler unbiased estimator \eqref{eq:std-grad}. This is because when $\theta$ is close to $\hat \theta$ we would expect the terms $\nabla \hat f(\theta)$ and $\nabla \hat f(\hat \theta)$ to be correlated. This reduction in variance is shown formally in Lemma 1, stated in Section \ref{sec:variance-reduction}.

The gradient estimate \eqref{eq:eff-grad} can be substituted into any stochastic gradient MCMC algorithm in place of $\nabla \hat f(\theta)$. We refer to SGLD using this alternative gradient estimate as SGLD-CV. The full procedure is outlined in Algorithm \ref{alg:SGLD-CV}.

\begin{algorithm}[t]
    \caption{SGLD-CV}
    \begin{algorithmic}
        \Require $\hat \theta$, $\nabla f( \hat \theta)$, $\epsilon$. \\
        Set $\theta_0 \leftarrow \hat \theta$.
        \For{$t \in 1, \dots, T$}
            \State Update $\nabla \tilde f( \theta_k)$ using \eqref{eq:eff-grad}
            \State Draw $\zeta_t \sim N( 0, \epsilon I )$
            \State $\theta_{k+1} \leftarrow \theta_k - \frac{h}{2} \nabla \tilde f(\theta_k) + \zeta_k$
        \EndFor
    \end{algorithmic}
    \label{alg:SGLD-CV}
\end{algorithm}

Implementing this in practice means finding a suitable $\hat\theta$, which we refer to as the \emph{centering value}. We show below that for the computational cost of SGLD-CV to be $O(1)$, we require  both $\hat \theta$ and the starting point of SGLD-CV, $\theta_0$, to be a $O(N^{-\frac 1 2})$ distance from the posterior mean.


In practice, we find $\hat\theta$ using \emph{stochastic optimisation} \citep{Robbins1951}, and then calculate the full log posterior gradient at this point $\nabla f(\hat \theta)$. We then start the algorithm from $\hat \theta$. In our implementations we use a simple stochastic optimisation method, known as \emph{stochastic gradient descent} \citep[SGD, see e.g.][]{Bottou2010}. The method works similarly to the standard optimisation method gradient descent, but at each iteration replaces the true gradient of the function with an unbiased estimate. A single update of the algorithm is as follows
\begin{equation}
    \theta_{k+1} = \theta_k - h_k \nabla \hat f(\theta),
\end{equation}
where $\nabla \hat f(\theta)$ is as defined in \eqref{eq:std-grad} and $h_k > 0$ is a small tuning constant referred to as the stepsize. Provided the stepsizes $h_k$ satisfy the following conditions $\sum_k h_k^2 < \infty$ and $\sum_k h_k = \infty$ then this algorithm will converge to a local maximum.

We show in Section \ref{sec:setup}, under our assumptions of log-concavity of the posterior, that finding $\hat\theta$ using SGD has a computational cost that is linear in $N$, and we can achieve the required accuracy with just a single pass through the data. As we then start SGLD-CV with this value for $\theta$, we can view finding the centering value as a replacement for the burn-in phase of the algorithm, and we find, in practice, that the time to find a good $\hat \theta$ is often quicker than the time it takes for SGLD to burn-in. One downside of this procedure is that the SGD algorithm, as well as the SGLD-CV algorithm itself needs to be tuned, which adds to the tuning burden.

In comparison to SGLD-CV, the SAGA algorithm by \cite{Dubey2016} also uses control variates to reduce the variance in the gradient estimate of SGLD. They show that this reduces the MSE of SGLD. The main difference is that their algorithm uses a previous state in the chain as the control variate, rather than an estimate of the mode. This means that SAGA does not require the additional optimisation step, so tuning should be easier. However we show in the experiments of Section \ref{sec:experiments}, that the algorithm gets more easily stuck in local stationary points, especially during burn-in. For more complex examples, the algorithm was prohibitively slow to burn-in because of this tendency to get stuck. \cite{Dubey2016} also do not show that SAGA has favourable computational cost results. 

\subsection{Variance reduction}
\label{sec:variance-reduction}

The improvements of using the control variate gradient estimate \eqref{eq:eff-grad} over the standard \eqref{eq:std-grad} become apparent when we calculate the variances of each. For our analysis, we make the assumption that the posterior is strongly log-concave, formally defined in Assumption \ref{ass:stronvex}. This has become a common assumption when analysing gradient based samplers that do not have an acceptance step \citep{{Durmus2016, Dalalyan2017}}. In all the following analysis we use $\norm{\cdot}$ to denote the Euclidean norm.

\begin{ass}
    \label{ass:stronvex}
     Strongly log-concave posterior: there exists positive constants $m$ and $M$, such that the following conditions hold for the negative log posterior
    \begin{align}
        f(\theta) - f(\theta') - \nabla f(\theta')^\top (\theta - \theta') &\geq \frac m 2 \norm{\theta - \theta'}^2 \\
        \norm{\nabla f(\theta) - \nabla f(\theta')} &\leq M \norm{\theta - \theta'}.
    \end{align}
    for all $\theta, \theta' \in \mathbb R^d$.
\end{ass}

We further need a Lipschitz condition for each of the likelihood terms in order to bound the variance of our control-variate estimator of the gradient.
\begin{ass}
 \label{ass:Lipschitz}
 Lipschitz: there exists constants $L_0,\ldots,L_N$ such that
 \[
         \norm{\nabla f_i(\theta) - \nabla f_i(\theta')} \leq L_i \norm{\theta - \theta'}, \mbox{ for $i=0,\ldots,N$}.
 \]
\end{ass}

Using Assumption \ref{ass:Lipschitz} we are able to derive a bound on the variance of the gradient estimate of SGLD-CV. This bound is formally stated in Lemma \ref{lem:grad-noise}.

\begin{lem}
    \label{lem:grad-noise}
    Under Assumption \ref{ass:Lipschitz}. Let $\theta_k$ be the state of SGLD-CV at the $k^{th}$ iteration, with stepsize $h$ and centering value $\hat \theta$.
    Assume we estimate the gradient using the control variate estimator with $p_i=L_i/\sum_{j=1}^N L_j$ for $i=1,\ldots,N$. Define $\xi_k := \nabla \tilde f(\theta_k) - \nabla f(\theta_k)$, so that $\xi_k$ measures the noise in the gradient estimate $\nabla \tilde f$ and has mean 0. Then for all $\theta_k, \hat \theta \in \mathbb R^d$, for all $k = 1, \dots, K$ we have
    \begin{equation} 
        \E\norm{\xi_k}^2 \leq \frac{\left(\sum_{i=1}^N L_i\right)^2}{n} \E\norm{\theta_k - \hat \theta}^2.
        \label{grad-var}
    \end{equation}
\end{lem}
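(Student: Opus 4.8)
The plan is to expand $\xi_k$ using the definitions, recognise it as a centred importance-sampling estimator, and then bound its second moment using Assumption \ref{ass:Lipschitz}. Write $g_i := \nabla f_i(\theta_k) - \nabla f_i(\hat\theta)$ for $i=1,\dots,N$. Substituting \eqref{eq:eff-grad} together with the decomposition $\nabla f(\theta) = \nabla f_0(\theta) + \sum_{i=1}^N \nabla f_i(\theta)$ and $\nabla f(\hat\theta) = \nabla f_0(\hat\theta) + \sum_{i=1}^N \nabla f_i(\hat\theta)$, the prior gradient terms $\nabla f_0(\theta)$ and $\nabla f_0(\hat\theta)$ cancel, leaving
\[
\xi_k \;=\; \nabla\tilde f(\theta_k) - \nabla f(\theta_k) \;=\; \frac1n\sum_{i\in S}\frac{g_i}{p_i} \;-\; \sum_{i=1}^N g_i .
\]
Model $S$ as $n$ independent draws $I_1,\dots,I_n$ from the distribution on $\{1,\dots,N\}$ with $\mathrm{P}(I_j=i)=p_i$ (which is legitimate since the $p_i$ are strictly positive). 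Then $\E[g_{I_j}/p_{I_j}\mid\theta_k]=\sum_{i=1}^N g_i$, so, conditionally on $\theta_k$ and $\hat\theta$, $\xi_k$ is the average of $n$ i.i.d.\ mean-zero vectors; in particular $\E[\xi_k\mid\theta_k]=0$, confirming the claim in the statement.

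Next I would exploit the i.i.d.\ structure to compute the conditional second moment. Since the $n$ summands are independent and centred, the cross terms vanish and
\[
\E\!\left[\norm{\xi_k}^2\mid\theta_k\right] \;=\; \frac1n\,\E\!\left[\,\Big\lVert \tfrac{g_{I_1}}{p_{I_1}} - \textstyle\sum_{i=1}^N g_i \Big\rVert^2 \,\middle|\, \theta_k\right] \;\le\; \frac1n\,\E\!\left[\frac{\norm{g_{I_1}}^2}{p_{I_1}^2}\,\middle|\,\theta_k\right] \;=\; \frac1n\sum_{i=1}^N\frac{\norm{g_i}^2}{p_i},
\]
where the inequality is just $\E\norm{Z-\E Z}^2\le\E\norm{Z}^2$. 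Now insert the choice $p_i = L_i/\sum_{j=1}^N L_j$ and then the Lipschitz bound $\norm{g_i} = \norm{\nabla f_i(\theta_k)-\nabla f_i(\hat\theta)}\le L_i\norm{\theta_k-\hat\theta}$ from Assumption \ref{ass:Lipschitz}:
\[
\E\!\left[\norm{\xi_k}^2\mid\theta_k\right] \;\le\; \frac{\sum_{j=1}^N L_j}{n}\sum_{i=1}^N\frac{\norm{g_i}^2}{L_i} \;\le\; \frac{\sum_{j=1}^N L_j}{n}\sum_{i=1}^N L_i\norm{\theta_k-\hat\theta}^2 \;=\; \frac{\left(\sum_{i=1}^N L_i\right)^2}{n}\norm{\theta_k-\hat\theta}^2 .
\]
Taking expectations over $\theta_k$ by the tower property gives \eqref{grad-var}.

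The analytic content here is light; the one point that needs care is the precise minibatch sampling scheme for $S$. The argument above is cleanest when $S$ is drawn as $n$ i.i.d.\ samples with replacement, which matches the description that the expected number of times $i$ appears is $np_i$. If instead one insists on sampling $S$ without replacement, the independence used to kill the cross terms no longer holds and those covariance contributions must be tracked explicitly; one then checks they do not spoil the bound (indeed for the natural without-replacement schemes they are non-positive, so the same inequality survives). I expect this bookkeeping around the sampling model, rather than any estimate in the main chain of inequalities, to be the only subtle step.
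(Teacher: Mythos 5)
Your proof is correct and follows essentially the same route as the paper's: write $\xi_k$ as a centred importance-sampling average, use independence of the $n$ draws to kill the cross terms, apply $\E\norm{Z-\E Z}^2\le\E\norm{Z}^2$, and then insert the choice $p_i = L_i/\sum_j L_j$ together with the Lipschitz bound. Your explicit cancellation of the $\nabla f_0$ terms and the remark on with- versus without-replacement sampling are slightly tidier than the paper's presentation, but the argument is the same.
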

All proofs are relegated to the Appendix. 
If Assumption \ref{ass:stronvex} also holds, then we can choose an $M=\sum_{i=0}^N L_i$, and our bound \eqref{grad-var} implies 
\[
      \E\norm{\xi_k}^2 \leq \frac{M^2}{n} \E\norm{\theta_k - \hat \theta}^2.
\]
We will use this form of the bound for the rest of the analysis. While it looks like picking $p_i$ will require estimates of the Lipschitz constants $L_i$; in practice, under Assumption \ref{ass:linear} stated below, we can just use the standard $p_i = 1 / N$ for all $i$. We use $p_i = 1 / N$ in all our implementations in the experiments of Section \ref{sec:experiments}.

In order to consider how SGLD-CV scales with $N$ we need to make assumptions on the properties of the posterior and how these change with $N$. To make discussions concrete we will focus on the following, strong, assumption that each likelihood-term in the posterior is strongly log-concave. As we discuss later, our results apply under weaker conditions.

\begin{ass}
    \label{ass:linear}
       Assume there exists positive constants $L$ and $l$ such that $f_i$ satisfies the following conditions
    \begin{align*}
        f_i(\theta) - f_i(\theta') - \nabla f_i(\theta')^\top (\theta - \theta') &\geq \frac l 2 \norm{\theta - \theta'}^2 \\
        \norm{\nabla f_i(\theta) - \nabla f_i(\theta')} &\leq L \norm{\theta - \theta'}.
    \end{align*}
    for all $i \in 0, \dots, N$ and $\theta, \theta' \in \mathbb R^d$.
\end{ass}

Under this assumption the log-concavity constants, $m$ and $M$, of the posterior both increase linearly with $N$, as shown by the following Lemma.

\begin{lem}
    \label{lem:lin-just}
    Suppose Assumption \ref{ass:linear} holds. Then the log-posterior, $f$, satisfies the following
    \begin{align*}
        f(\theta) - f(\theta') - \nabla f(\theta')^\top (\theta - \theta') &\geq \frac{l(N+1)}{2} \norm{\theta - \theta'}^2 \\
        \norm{\nabla f(\theta) - \nabla f(\theta')} &\leq L(N+1) \norm{\theta - \theta'}.
    \end{align*}
    Thus the posterior is strongly log-concave with parameters $M=(N+1)L$ and $m=(N+1)l$.
\end{lem}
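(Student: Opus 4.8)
The plan is to exploit the additive structure $f = f_0 + \sum_{i=1}^N f_i$, which follows directly from $p(\theta\mid\mb x)\propto p(\theta)\prod_{i=1}^N p(x_i\mid\theta)$ and the definitions $f_i(\theta) = -\log p(x_i\mid\theta)$, $f_0(\theta)=-\log p(\theta)$. Since differentiation is linear, we also have $\nabla f = \sum_{i=0}^N \nabla f_i$, so both claimed inequalities should reduce to summing the corresponding per-term inequalities from Assumption \ref{ass:linear} over the $N+1$ indices $i=0,\ldots,N$.

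For the strong-convexity bound, I would write down the first inequality of Assumption \ref{ass:linear} for each $i\in\{0,\ldots,N\}$ and add the $N+1$ resulting inequalities. The left-hand sides telescope into $f(\theta)-f(\theta')-\nabla f(\theta')^\top(\theta-\theta')$ by linearity of $f$ and $\nabla f$, while the right-hand sides sum to $\tfrac{l(N+1)}{2}\norm{\theta-\theta'}^2$, giving the first displayed inequality. For the gradient-Lipschitz bound, I would apply the triangle inequality to $\nabla f(\theta)-\nabla f(\theta') = \sum_{i=0}^N\bigl(\nabla f_i(\theta)-\nabla f_i(\theta')\bigr)$, then bound each summand by $L\norm{\theta-\theta'}$ using the second inequality of Assumption \ref{ass:linear}, obtaining $L(N+1)\norm{\theta-\theta'}$. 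Finally, matching these two inequalities against the statement of Assumption \ref{ass:stronvex} identifies the log-concavity constants as $m=(N+1)l$ and $M=(N+1)L$.

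There is no genuine obstacle here: the argument is just term-by-term addition plus the triangle inequality, and both steps hold for all $\theta,\theta'\in\mathbb R^d$ since Assumption \ref{ass:linear} is assumed to hold globally. The only point worth stating carefully is the bookkeeping of the index range — there are $N+1$ terms (the $N$ likelihood contributions together with the prior term $f_0$), which is where the factor $N+1$ rather than $N$ comes from.
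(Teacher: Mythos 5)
Your proposal is correct and follows exactly the paper's own argument: the strong-convexity bound is obtained by summing the per-term inequalities of Assumption \ref{ass:linear} over the $N+1$ indices $i=0,\ldots,N$ using the additive decomposition $f=\sum_{i=0}^N f_i$, and the Lipschitz bound follows from the triangle inequality applied to $\nabla f(\theta)-\nabla f(\theta')=\sum_{i=0}^N\bigl(\nabla f_i(\theta)-\nabla f_i(\theta')\bigr)$. Nothing further is needed.
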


To see the potential benefit of using control variates to estimate the gradient in situations where $N$ is large, we can now compare the variance bound from Lemma \ref{lem:grad-noise}, with a bound on the variance of the simple estimator, $\nabla \hat f (\theta)$. If we assume that $\norm{\nabla f_i(\theta)}$ is bounded by some constant $\sigma$, for all $i = 0, \dots, N$ and for all $\theta \in \mathbb R^d$, then \cite{Dubey2016} show that for SGLD
\begin{equation}
    \E \norm{\nabla \hat f(\theta) - \nabla f(\theta)}^2 \leq \frac{2N^2 \sigma^2}{n},
\end{equation}
for all $\theta \in \mathbb R^d$.

We can see that the bound on the gradient estimate variance in \eqref{grad-var} depends on the distance between $\theta_k$ and $\hat \theta$. Appealing to the Bernstein-von Mises theorem \citep{LeCam2012}, under standard asymptotics we would expect the distance $\E \norm{\theta_k - \hat \theta}^2$ to be $O(1 / N)$, if $\hat \theta$ is within $O(N^{-1/2})$ of the posterior mean, once the MCMC algorithm has burnt in. As $M$ is $O(N)$, this suggests that using control variates could give an $O(N)$ reduction in variance, and this plays a key part in the computational cost improvements we show in the next section.

\subsection{Computational cost of SGLD-CV}
\label{sec:comp-cost-sgld}

In this section, we investigate how applying control variates to the gradient estimate of SGLD reduces the computational cost of the algorithm. 
    
In order to show this, we investigate the Wasserstein-Monge-Kantorovich (Wasserstein) distance $W_2$ between the distribution defined by the SGLD-CV algorithm at each iteration and the true posterior as $N$ is changed. For two measures $\mu$ and $\nu$ defined on the probability space $(\mathbb R^d, B(\mathbb R^d))$, and for a real number $q > 0$, the distance $W_q$ is defined by
    \begin{equation*}
        W_q(\mu, \nu) = \left[ \inf_{\gamma \in \Gamma(\mu, \nu)} \int_{\mathbb R^d \times \mathbb R^d} \norm{\theta - \theta'}^q d\gamma(\theta,\theta') \right]^{\frac 1 q},
    \end{equation*}
    where the infimum is with respect to all joint distributions $\Gamma$ having $\mu$ and $\nu$ as marginals. The Wasserstein distance is a natural distance measure to work with for Monte Carlo algorithms, as discussed in \cite{{Dalalyan2017, Durmus2016}}.
    
One issue when working with the Wasserstein distance is that it is not invariant to transformations. For example scaling all entries of $\theta$ by a constant will scale the Wasserstein distance by the same constant. A linear transformation of the parameters will result in a posterior that is still strongly log-concave, but with different constants $m$ and $M$. To account for this we suggest measuring error by the quantity $\sqrt m W_2$, which is invariant to scaling $\theta$ by a constant. Theorem 1 of \cite{Durmus2016} bounds the standard deviation of any component of $\theta$ by a constant times $1/\sqrt{m}$, so we can view the quantity  $\sqrt m W_2$ as measuring the error on a scale that is relative to the variability of $\theta$ under the posterior distribution.
    
There are a number of quantities that will affect the performance of SGLD and SGLD-CV. These include the step size $h$, the minibatch size $n$ and the total number of iterations $K$. In the analysis to follow we find conditions on $h$, $n$ and $K$ that ensure the Wasserstein distance between the distribution defined by SGLD-CV and the true posterior distribution $\pi$ are less than some $\epsilon > 0$. We use these conditions to calculate the computational cost, measured as $Kn$, required for this algorithm to reach the satisfactory error $\epsilon$.

The first step is to find an upper bound on the Wasserstein distance between SGLD-CV and the posterior distribution in terms of $h$, $n$, $K$ and the constants $m$ and $M$ declared in Assumption \ref{ass:stronvex}.

\begin{prop}
    \label{thm:wasser-bound}
    Under Assumptions \ref{ass:stronvex} and \ref{ass:Lipschitz}, let $\theta_{k}$ be the state of SGLD-CV at the $k^{th}$ iteration of the algorithm with stepsize $h$, initial value $\theta_0$, centering value $\hat \theta$. Let the distribution of $\theta_k$ be $\nu_k$. Denote the expectation of $\theta$ under the posterior distribution $\pi$ by $\bar \theta$. If $h < \frac{2m}{2M^2 + m^2}$, then for all integers $k \geq 0$,
\begin{equation*}
    W_2(\nu_k, \pi) \leq (1 - A)^K W_2(\nu_0, \pi) + \frac C A + \frac{B^2}{C + \sqrt A B},
\end{equation*}
where
\begin{align*}
    A &= 1 - \sqrt{\frac{2h^2M}{n} + (1 - mh)^2}, \\
    B &= \sqrt{\frac{2h^2M^2}{n} \left[ \E\norm{\hat \theta - \bar \theta}^2 + \frac d m \right]}, \\
    C &= \alpha M (h^3 d)^{\frac 1 2},
\end{align*}
$\alpha = 7 \sqrt 2 / 6$ and $d$ is the dimension of $\theta_{k}$.
\end{prop}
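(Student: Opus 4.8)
\emph{Overall approach.} The plan is to adapt the synchronous-coupling argument used by \cite{Dalalyan2017} to analyse inexact Langevin Monte Carlo; the one genuinely new ingredient is that the variance of the gradient noise here is not bounded by a constant but, by Lemma \ref{lem:grad-noise}, is controlled by $\E\norm{\theta_k-\hat\theta}^2$. Tracking this quantity is exactly what degrades the contraction rate from the usual $mh$ to $A$ and produces the extra $B$-term.

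\emph{One step of the coupling.} Fix $k$ and let $(\theta_k,\theta_k^\star)$ be an optimal $W_2$-coupling of $\nu_k$ and $\pi$. Apply one SGLD-CV step to $\theta_k$, writing $\nabla\tilde f(\theta_k)=\nabla f(\theta_k)+\xi_k$ with $\E[\xi_k\mid\theta_k]=0$, and, driven by the same Gaussian increment, run the Langevin diffusion \eqref{eq:langevin} from $\theta_k^\star$ for the matching amount of time, so its endpoint $\vartheta$ satisfies $\vartheta\sim\pi$ by invariance while $\theta_{k+1}\sim\nu_{k+1}$; hence $W_2(\nu_{k+1},\pi)^2\le\E\norm{\theta_{k+1}-\vartheta}^2$. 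Decompose
\begin{equation*}
\theta_{k+1}-\vartheta = U_k + V_k - \tfrac h2\,\xi_k,\qquad U_k=(\theta_k-\theta_k^\star)-\tfrac h2\big(\nabla f(\theta_k)-\nabla f(\theta_k^\star)\big),
\end{equation*}
where $V_k$ is the integrated discretisation error $\int\big(\nabla f(\vartheta_s)-\nabla f(\theta_k^\star)\big)\,ds$. Since $\xi_k$ is independent of the driving Brownian motion and of $\theta_k^\star$ and is mean zero given $\theta_k$, it is uncorrelated with $U_k+V_k$, so $\E\norm{\theta_{k+1}-\vartheta}^2=\E\norm{U_k+V_k}^2+\tfrac{h^2}{4}\E\norm{\xi_k}^2$.

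\emph{Bounding the three terms.} Strong convexity and $M$-smoothness (Assumption \ref{ass:stronvex}), together with a step-size restriction of the stated form, give the one-step contraction $\norm{U_k}\le(1-mh)\norm{\theta_k-\theta_k^\star}$, hence $\sqrt{\E\norm{U_k}^2}\le(1-mh)W_2(\nu_k,\pi)$. For $V_k$, the Lipschitz bound in Assumption \ref{ass:stronvex} reduces $\E\norm{V_k}^2$ to controlling $\E\norm{\vartheta_s-\vartheta_0}^2$ along the \emph{stationary} diffusion, which by It\^o's isometry, the bound $\E_\pi\norm{\nabla f}^2\le dM$ (integration by parts), and Cauchy--Schwarz in $s$ yields $\sqrt{\E\norm{V_k}^2}\le C=\alpha M(h^3d)^{1/2}$ with $\alpha=7\sqrt2/6$; this is precisely the discretisation estimate of \cite{Dalalyan2017}. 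Minkowski's inequality in $L^2$ then gives $\sqrt{\E\norm{U_k+V_k}^2}\le(1-mh)W_2(\nu_k,\pi)+C$. For the noise term, Lemma \ref{lem:grad-noise} gives $\tfrac{h^2}{4}\E\norm{\xi_k}^2\le\tfrac{h^2M^2}{4n}\E\norm{\theta_k-\hat\theta}^2$; splitting $\theta_k-\hat\theta=(\theta_k-\bar\theta)+(\bar\theta-\hat\theta)$ and bounding $\E_{\nu_k}\norm{\theta_k-\bar\theta}^2$ in terms of $W_2(\nu_k,\pi)^2$ and $\mathrm{tr}(\C_\pi)\le d/m$ (again a consequence of strong log-concavity), this contributes a multiple of $h^2M^2W_2(\nu_k,\pi)^2/n$ — the feedback term appearing under the square root defining $A$ — together with the constant $B^2$. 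Collecting everything yields the quadratic recursion
\begin{equation*}
W_2(\nu_{k+1},\pi)^2 \le \big((1-A)W_2(\nu_k,\pi)+C\big)^2 + B^2,
\end{equation*}
where absorbing the $W_2(\nu_k,\pi)^2$ feedback into the square turns $1-mh$ into $1-A$, and the condition $h<\tfrac{2m}{2M^2+m^2}$ is exactly what guarantees $A\in(0,1)$ while keeping $1-mh\le 1-A$ so the cross term is dominated.

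\emph{Closing the recursion.} It remains to iterate $w_{k+1}^2\le((1-A)w_k+C)^2+B^2$ with $w_k=W_2(\nu_k,\pi)$. A direct computation shows $\phi:=C/A+B^2/(C+\sqrt A B)$ is a super-fixed point of this monotone map, i.e. $((1-A)\phi+C)^2+B^2\le\phi^2$: factoring the difference of squares and rationalising, the required inequality collapses to $C^2+A(1-A)B^2\ge 0$, which holds since $0<A<1$. A short induction then gives $w_k\le(1-A)^kw_0+\phi$, the claimed bound. I expect the main obstacle to be precisely the state-dependence of the gradient-noise variance: it forces $W_2(\nu_k,\pi)^2$ back into the recursion (hence the rate $A$ rather than $mh$) and must be handled entirely at the level of second moments — taking square roots too early would replace the quantity under the square root in $A$ by a strictly larger one and destroy the argument.
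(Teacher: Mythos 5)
Your proposal is correct and follows essentially the same route as the paper: a synchronous coupling of the SGLD-CV chain with the stationary Langevin diffusion in the style of \cite{Dalalyan2017}, the decomposition into a contractive term, the discretisation error bounded by $C$, and the state-dependent gradient noise controlled via Lemma \ref{lem:grad-noise}, whose $W_2^2(\nu_k,\pi)$ feedback is absorbed into the rate $A$ before solving the quadratic recursion (the paper cites Dalalyan's Lemma 1 for that last step, whereas you verify the super-fixed point directly, but the content is identical). Your explicit check that $A\leq mh$ is needed to dominate the cross term is a point the paper passes over silently, and is a worthwhile addition.
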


The proof of this proposition is closely related to the proof of Proposition 2 of \cite{Dalalyan2017}. The extra complication comes from our bound on the variance of the estimator of the gradient; which depends on the current state of the SGLD-CV algorithm, rather than being bounded by a global constant. 

We can now use Proposition \ref{thm:wasser-bound} to find conditions on $K$, $h$ and $n$ in terms of the constants $M$ and $m$ such that the Wasserstein distance is bounded by some positive constant $\epsilon_0/\sqrt{m}$ at its final iteration $K$. 

\begin{thm}
    \label{thm:conditions}
    Under Assumptions \ref{ass:stronvex} and \ref{ass:Lipschitz}, let $\theta_{K}$ be the state of SGLD-CV at the $K^{th}$ iteration of the algorithm with stepsize $h$, initial value $\theta_0$, centering value $\hat \theta$. Let the distribution of $\theta_k$ be $\nu_K$. Denote the expectation of $\theta$ under the posterior distribution $\pi$ by $\bar \theta$. Define $R := M / m$. Then for any $\epsilon_0 > 0$, if the following conditions hold:
\begin{align}
    h &\leq \frac 1 m \max \left\{ \frac{n}{2R^2 + n}, \frac{\epsilon_0^2}{64 R^2 \alpha^2 d} \right\}, \\
    Kh &\geq \frac 1 m \log\left[ \frac{4 m}{\epsilon_0^2} \left( \E\norm{\theta_0 - \bar \theta}_2^2 + d / m \right) \right], \\
    n &\geq \frac{64 R^2 \beta}{\epsilon_0^2} m \left[ \E \norm{\hat \theta - \bar \theta}^2 + \frac d m \right],
\end{align}
where 
\[
    \beta = \max \left\{ \frac{1}{2R^2 + 1}, \frac{\epsilon_0^2}{64 R^2 \alpha^2 d} \right\},
\]
$\alpha = 7 \sqrt 2 / 6$, and $d$ is the dimension of $\theta_k$,
then $W_2(\nu_K, \pi) \leq \epsilon_0/\sqrt{m}$.
\end{thm}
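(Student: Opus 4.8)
The plan is to start from the explicit three‑term bound of Proposition~\ref{thm:wasser-bound},
\[
W_2(\nu_K,\pi) \;\le\; (1-A)^K\,W_2(\nu_0,\pi) \;+\; \frac{C}{A} \;+\; \frac{B^2}{C+\sqrt A\,B},
\]
and to choose $h$, $K$, $n$ so that each of the three summands is at most a fixed fraction of $\epsilon_0/\sqrt m$ (say the first $\le \epsilon_0/(2\sqrt m)$ and the other two each $\le \epsilon_0/(4\sqrt m)$; the precise split is cosmetic). The ingredient that glues everything together is a lower bound on the contraction rate of the form $A \ge c\,mh$ for a fixed constant $c\in(0,1)$. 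This is obtained by expanding the square root in $A$: using the stepsize restriction $h\le \tfrac1m\cdot\tfrac{n}{2R^2+n}$ (which in particular gives $mh<1$, since $\tfrac{n}{2R^2+n}<1$) one shows that the term under the root, $\tfrac{2h^2M^2}{n}+(1-mh)^2$, is at most $(1-c\,mh)^2$. This is exactly the point at which the minibatch size $n$ enters the control of $A$, and it is what fixes the shape of the first constraint on $h$.

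Given $A\ge c\,mh$, the three summands are handled in turn. For the first, $(1-A)^K\le e^{-c\,Kmh}$, and since strong log‑concavity of $\pi$ forces every coordinate to have variance at most $1/m$ (Theorem~1 of \cite{Durmus2016}), the trivial product coupling gives $W_2(\nu_0,\pi)^2\le \E\norm{\theta_0-\bar\theta}^2 + d/m$; requiring $e^{-c\,Kmh}\sqrt{\E\norm{\theta_0-\bar\theta}^2+d/m}\le \epsilon_0/(2\sqrt m)$ then rearranges, after squaring and taking logs, into the stated lower bound on $Kh$. For the second, $C=\alpha M(h^3d)^{1/2}$ and $A\ge c\,mh$ give $C/A \le (\alpha/c)\,R\sqrt{hd}$, so demanding $h \lesssim \epsilon_0^2/(R^2\alpha^2 d\,m)$ makes $C/A\le\epsilon_0/(4\sqrt m)$; intersecting this with the contraction requirement on $h$ is what produces the $\max$/$\min$ structure in the hypothesis. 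For the third, use the crude bound $B^2/(C+\sqrt A\,B)\le B/\sqrt A\le B\sqrt{1/(c\,mh)}$, substitute $B^2=\tfrac{2h^2M^2}{n}\big(\E\norm{\hat\theta-\bar\theta}^2+d/m\big)$, and solve the resulting inequality for $n$; because $mh$ is already pinned down by the stepsize bound, this collapses to the stated lower bound on $n$ with the constant $\beta$.

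The effort here is bookkeeping rather than anything conceptual, and that is also where I expect the main obstacle. The three conditions are genuinely coupled — the admissible range of $h$ depends on $n$ through $A$, while the required $n$ depends on $h$ through $B$ — so one must present conditions that are simultaneously satisfiable and verify that the per‑term error budget really sums to $\epsilon_0/\sqrt m$ rather than something larger; resolving the implicit self‑dependence $n\ge(\text{const})\,h\,(\cdots)$ with $h\le(\text{const})/m\cdot(\cdots)$ is the step where the exact constants (the factor $64$, the quantity $\beta$) are forced. A secondary technical point is making the algebraic inequality $\tfrac{2h^2M^2}{n}+(1-mh)^2\le(1-c\,mh)^2$ hold uniformly in $n$ and $h$ with a clean constant $c$; this is the one place where the bound $mh<1$ is actually used.
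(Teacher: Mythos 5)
Your proposal follows essentially the same route as the paper's proof: the same three-term decomposition from Proposition~\ref{thm:wasser-bound} with the same $\tfrac12,\tfrac14,\tfrac14$ error budget, the same key contraction bound $A \ge c\,mh$ with $c=\tfrac12$ (the paper derives it by writing $h = \tfrac{\gamma}{m}\tfrac{2n}{2R^2+n}$ so that $A = 1-\sqrt{1-2mh(1-\gamma)}$, which is algebraically equivalent to your $(1-c\,mh)^2$ comparison), the same use of Theorem~1 of \cite{Durmus2016} for $W_2(\nu_0,\pi)$, and the same handling of $C/A$ and $B/\sqrt{A}$ leading to the constants $64$ and $\beta$. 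The approach is correct and matches the paper.
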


As a corollary of this result, we have the following, which gives a bound on the computational cost of SGLD, as measured by $Kn$, to achieve a required bound on the Wasserstein distance.
\begin{cor}
    \label{thm:scaling}
    Assume that Assumptions \ref{ass:stronvex} and \ref{ass:linear} and the conditions of Theorem \ref{thm:conditions} hold. Fix $\epsilon_0$ and define
    \[
        C_1 = \min \left\{ 2R^2 + 1, \frac{64 R^2 \alpha^2 d}{\epsilon_0^2} \right\}.
    \]
    and $C_2 := 64 R^2 \beta / \epsilon_0^2$.
    We can implement an SGLD-CV algorithm with $W_2(\nu_K, \pi) < \epsilon_0 / \sqrt m$ such that
    \[
        Kn \leq \left[ C_1 \log \left[ m \E \norm{\theta_0 - \bar \theta}^2 + d \right] + C_1 \log \frac{4}{\epsilon_0^2} + 1 \right] \left[ C_2 m \E \norm{\hat \theta - \bar \theta}^2 + C_2 d + 1 \right].
    \]
\end{cor}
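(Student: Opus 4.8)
This corollary should come out as a direct consequence of Theorem \ref{thm:conditions}: the plan is simply to exhibit a choice of step size $h$, iteration count $K$ and minibatch size $n$ that sits at the boundary of the three conditions stated there, and then to bound the product $Kn$. First I would take $h$ equal to its largest admissible value, $h = \frac 1m \max\{ \frac{n}{2R^2+n}, \frac{\epsilon_0^2}{64R^2\alpha^2 d}\}$, so that $mh = \max\{ \frac{n}{2R^2+n}, \frac{\epsilon_0^2}{64R^2\alpha^2 d}\}$. Since $n \ge 1$ we have $\frac{n}{2R^2+n} \ge \frac{1}{2R^2+1}$, hence $mh \ge \max\{\frac{1}{2R^2+1}, \frac{\epsilon_0^2}{64R^2\alpha^2 d}\} = 1/C_1$; that is, $\frac{1}{mh} \le C_1$. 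This is the key inequality that turns the second condition of Theorem \ref{thm:conditions} into a bound on $K$ in terms of $C_1$.

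Next I would take $K$ to be the smallest integer satisfying the second condition, i.e. $K = \lceil \frac{1}{mh}\log[\frac{4m}{\epsilon_0^2}(\E\norm{\theta_0 - \bar\theta}^2 + d/m)]\rceil$. Using $\frac{1}{mh}\le C_1$, the elementary bound $\lceil x\rceil \le x+1$, and the rearrangement $\frac{4m}{\epsilon_0^2}(\E\norm{\theta_0-\bar\theta}^2 + d/m) = \frac{4}{\epsilon_0^2}(m\E\norm{\theta_0-\bar\theta}^2 + d)$, this yields $K \le C_1\log[m\E\norm{\theta_0-\bar\theta}^2 + d] + C_1\log\frac{4}{\epsilon_0^2} + 1$, which is exactly the first bracket in the claimed bound. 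Likewise, taking $n$ to be the smallest integer satisfying the third condition, and writing $\frac{64R^2\beta}{\epsilon_0^2}m(\E\norm{\hat\theta-\bar\theta}^2 + d/m) = C_2 m\E\norm{\hat\theta-\bar\theta}^2 + C_2 d$, gives $n \le C_2 m\E\norm{\hat\theta - \bar\theta}^2 + C_2 d + 1$, the second bracket. Multiplying the two bounds gives the stated estimate on $Kn$.

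The argument is essentially bookkeeping, so I do not expect a genuine obstacle; the points that need care are (i) verifying that this single triple $(h,K,n)$ simultaneously satisfies \emph{all} the hypotheses of Theorem \ref{thm:conditions} — and the step-size restriction $h < 2m/(2M^2+m^2)$ inherited from Proposition \ref{thm:wasser-bound} — where it helps that enlarging $n$ only relaxes the constraint on $h$; (ii) the integer rounding of $K$ and $n$, which is what produces the additive ``$+1$'' terms and must be tracked honestly; and (iii) checking that the argument of the logarithm is at least $1$, so that $\frac{1}{mh}\log[\cdots] \le C_1\log[\cdots]$ is legitimate, which holds once $\epsilon_0$ is small enough. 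Finally, I would note that although the bound on $Kn$ itself uses only the conditions of Theorem \ref{thm:conditions}, Assumption \ref{ass:linear} (through Lemma \ref{lem:lin-just}) is what makes it meaningful: it forces $R = M/m = L/l$ to be independent of $N$, so that the only residual $N$-dependence is through $m\E\norm{\hat\theta-\bar\theta}^2$, $m\E\norm{\theta_0-\bar\theta}^2$ and $d$, which is precisely the quantity analysed in the discussion following the corollary.
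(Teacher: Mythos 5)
The paper gives no explicit proof of Corollary \ref{thm:scaling}; it is left as an immediate consequence of Theorem \ref{thm:conditions}, and your reconstruction is exactly the intended bookkeeping: the identity $C_1 = 1/\beta$, the choice of $h$ so that $1/(mh)\le C_1$, the ceiling bounds $K\le C_1\log[\cdot]+1$ and $n\le C_2 m\E\norm{\hat\theta-\bar\theta}^2+C_2 d+1$, and the product. Your argument is correct. One refinement worth making on point (i): rather than taking $h$ maximal, set $h=\beta/m$ directly, which already gives $1/(mh)=C_1$, satisfies the first condition of Theorem \ref{thm:conditions} since $\beta\le\max\{n/(2R^2+n),\,\epsilon_0^2/(64R^2\alpha^2 d)\}$, and (for $\epsilon_0$ not too large) also respects the inherited restriction $h<2m/(2M^2+m^2)=2/(m(2R^2+1))$ — whereas the maximal admissible $h$ from the branch $n/(m(2R^2+n))$ can exceed that threshold once $n$ is large, so "enlarging $n$ only relaxes the constraint on $h$" actually works against you there.
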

The constants, $C_1$ and $C_2$, in the bound on $Kn$, depend on $\epsilon_0$ and $R=M/m$. It is simple to show that both constants are increasing in $R$. Under Assumption \ref{ass:linear} we have that $R$ is a constant as $N$ increases. Corollary \ref{thm:scaling} suggests that provided $\norm{\theta_0 - \bar \theta} < c / \sqrt m$ and $\norm{\hat \theta - \bar \theta} < c / \sqrt m$, for some constant $c$; then the computational cost of SGLD-CV will be bounded by a constant. Since we suggest starting SGLD-CV at $\hat \theta$, then technically we just need this to hold for $\norm{\hat \theta - \bar \theta}$. Under Assumption \ref{ass:linear} we have that $m$ increases linearly with $N$, so this corresponds to needing $\norm{\hat \theta - \bar \theta}<c_1/\sqrt{N}$ as $N$ increases. Additionally, by Theorem 1 of \cite{Durmus2016} we have that the variance of the posterior scales like $1/m=1/N$ as $N$ increases, so we can interpret the $1/\sqrt{N}$ factor as being a measure of the spread of the posterior as $N$ increases. The form of the corollary makes it clear that a similar argument would apply under weaker assumptions than Assumption \ref{ass:linear}. We only need that the ratio of the log-concavity constants, $M/m$, of the posterior remains bounded as $N$ increases.

This corollary also gives insight into the computational penalty you pay for a poor choice of $\theta_0$ or $\hat \theta$. As $\norm{\theta_0-\bar \theta}$ increases, the bound on the computational cost will increase logarithmically with this distance. By comparison the bound increases linearly with $\norm{\hat\theta-\bar \theta}$.

\subsection{Setup Costs}
\label{sec:setup}

There are a number of known results on the convergence of SGD under the strongly log-concave conditions of Assumption \ref{ass:stronvex}. These will allow us to quantify the setup cost of finding the point $\hat \theta$ in this setting. More complex cases are explored empirically in the experiments in Section \ref{sec:experiments}. Lemma \ref{lem:sgd} due to \cite{Nemirovski2009} quantifies the convergence of the final point of SGD.

\begin{lem}
    \label{lem:sgd}
    Under Assumption \ref{ass:stronvex}, let $\hat \theta$ denote the final state of SGD with stepsizes $h_k = 1 / (mk)$ after $K$ iterations. Suppose $\E \norm{\nabla \hat f(\theta)}^2 \leq D^2$ and denote the true mode of $f$ by $\theta^*$. Then it holds that
    \[
        \E \norm{\hat \theta - \theta^*}^2 \leq \frac{4D^2}{m^2 K}.
    \]
\end{lem}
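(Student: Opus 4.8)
The plan is to track the quantity $e_k := \E\norm{\theta_k - \theta^*}^2$, the expected squared distance of the SGD iterates to the mode $\theta^*$, derive a one-step recursion for it, and then unroll that recursion for the specific stepsize schedule $h_k = 1/(mk)$. This is the classical SGD analysis for strongly convex objectives (cf. \cite{Nemirovski2009}), specialised to the unbiased minibatch gradient \eqref{eq:std-grad}.

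First I would expand the SGD update $\theta_{k+1} = \theta_k - h_k\nabla\hat f(\theta_k)$:
\[
\norm{\theta_{k+1} - \theta^*}^2 = \norm{\theta_k - \theta^*}^2 - 2h_k\,\nabla\hat f(\theta_k)^\top(\theta_k - \theta^*) + h_k^2\norm{\nabla\hat f(\theta_k)}^2 .
\]
Taking expectations, and using that the minibatch gradient \eqref{eq:std-grad} is conditionally unbiased, $\E[\nabla\hat f(\theta_k)\mid\theta_k] = \nabla f(\theta_k)$, together with the hypothesis $\E\norm{\nabla\hat f(\theta_k)}^2 \le D^2$, gives
\[
e_{k+1} \le e_k - 2h_k\,\E\!\left[\nabla f(\theta_k)^\top(\theta_k - \theta^*)\right] + h_k^2 D^2 .
\]
Next I would use strong convexity: adding the first inequality of Assumption \ref{ass:stronvex} to the same inequality with $\theta$ and $\theta'$ interchanged yields the gradient-monotonicity bound $(\nabla f(\theta) - \nabla f(\theta'))^\top(\theta - \theta') \ge m\norm{\theta - \theta'}^2$, and taking $\theta' = \theta^*$ with $\nabla f(\theta^*) = 0$ (since $\theta^*$ minimizes the strongly convex $f$) gives $\nabla f(\theta_k)^\top(\theta_k - \theta^*) \ge m\norm{\theta_k - \theta^*}^2$. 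Substituting produces the scalar recursion
\[
e_{k+1} \le (1 - 2mh_k)\, e_k + h_k^2 D^2 .
\]

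Finally, I would specialise to $h_k = 1/(mk)$ and set $Q := D^2/m^2$, so the recursion reads $e_{k+1} \le (1 - 2/k)e_k + Q/k^2$, and prove $e_k \le 4Q/k$ by induction for $k \ge 2$. The base case comes for free from the recursion at $k=1$: $e_2 \le (1 - 2)e_1 + Q = Q - e_1 \le Q \le 4Q/2$. For the inductive step the factor $1 - 2/k$ is nonnegative when $k \ge 2$, so $e_{k+1} \le (1 - 2/k)(4Q/k) + Q/k^2 = 4Q/k - 7Q/k^2$, and the elementary inequality $4Q/k - 7Q/k^2 \le 4Q/(k+1)$ (equivalent to $4k \le 7(k+1)$) closes the step. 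Hence $\E\norm{\hat\theta - \theta^*}^2 = e_K \le 4Q/K = 4D^2/(m^2 K)$.

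The calculations are routine; the only things requiring care are (i) noting that no smoothness/Lipschitz condition is used here, only the strong-convexity half of Assumption \ref{ass:stronvex} and the bounded-second-moment hypothesis, and (ii) the indexing and base-case bookkeeping: since $h_1 = 1/m$ makes $1 - 2mh_1 = -1$, the contraction only kicks in at $k \ge 2$, so one must first establish the bound at $k = 2$ directly from the recursion (the case $K = 1$ being either excluded or handled trivially). I expect (ii) — making the induction airtight across the first iteration — to be the only mild obstacle.
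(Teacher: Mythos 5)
The paper does not prove this lemma itself; it is quoted as a known result and attributed to \cite{Nemirovski2009}, so there is no in-paper proof to compare against. Your argument is the standard one underlying that reference and it is correct: the one-step recursion $e_{k+1} \leq (1-2mh_k)e_k + h_k^2 D^2$ follows from conditional unbiasedness of $\nabla \hat f$, the second-moment bound, and the gradient-monotonicity consequence $\left(\nabla f(\theta)-\nabla f(\theta^*)\right)^\top(\theta-\theta^*) \geq m \norm{\theta-\theta^*}^2$ of Assumption 1 with $\nabla f(\theta^*)=0$; and your induction (base case at $k=2$, with $4Q/k - 7Q/k^2 \leq 4Q/(k+1)$ closing the step) is airtight. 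The only caveats are the ones you already flag — the indexing convention for what "after $K$ iterations" means and the trivial/excluded $K=1$ case — neither of which affects the substance.
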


If we again assume that, as in \cite{Dubey2016}, $\norm{\nabla f_i(\theta)}$ is bounded by some constant $\sigma$, for all $i = 0, \dots, N$ and $\theta \in \mathbb R^d$ then $D^2$ will be $O(N^2 / n)$.
This means that under Assumption \ref{ass:linear}, we will need to process the full dataset once before the SGD algorithm has converged to a mode $\hat \theta$ within $O(N^{-\frac 1 2})$ of the posterior mean. It follows that, for these cases there are two one off $O(N)$ setup costs, one to find an acceptable mode $\hat \theta$ and one to find the full log posterior gradient at this point $\nabla f(\hat \theta)$.

\section{Post-processing control variates}
\label{sec:post}

Control variates can also been used to improve the inferences made from MCMC by reducing the variance of the output directly. The general aim of MCMC is to estimate expectations of functions, $g(\theta)$, with respect to the posterior $\pi$. Given an MCMC sample $\theta^{(1)}, \dots, \theta^{(M)}$, from the posterior $\pi$, we can estimate $\E[g(\theta)]$ unbiasedly as
\[
    \E[g(\theta)] \approx \frac{1}{M} \sum_{i=1}^M g(\theta^{(i)}).
\]

Suppose there exists a function $h(\theta)$, which has expectation 0 under the posterior. We can then introduce an alternative function, \[\tilde g( \theta ) = g(\theta) + h(\theta),\] where $\E[\tilde g(\theta)] = \E[g(\theta)]$. If $h(\cdot)$ is chosen so that it is negatively correlated with $g(\theta)$, then the variance of $\tilde g(\theta)$ will be reduced considerably.

\citet{Mira2013} introduce a way of choosing $h(\theta)$ almost automatically by using the gradient of the log-posterior. Choosing $h(\cdot)$ in this manner is referred to as a zero variance (ZV) control variate. \citet{Friel2016} showed that, under mild conditions, we can replace the log-posterior gradient with an unbiased estimate and still have a valid control variate. SGMCMC methods produce unbiased estimates of the log-posterior gradient, and so it follows that these gradient estimates can be applied as ZV control variates. For the rest of this section, we focus our attention on SGLD, but these ideas are easily extendable to other stochastic gradient MCMC algorithms. We refer to SGLD with these post-processing control variates as SGLD-ZV.

Given the setup outlined above, \citet{Mira2013} propose the following form for $h( \theta )$, 
\[
    h(\theta) = \Delta Q(\theta) + \nabla Q( \theta ) \cdot \mb z,
\]
here $Q(\theta)$ is a polynomial of $\theta$ to be chosen and $\mb z = f( \theta ) / 2$. $\Delta$ refers to the \emph{Laplace operator} $\frac{ \partial^2 }{ \partial \theta_1^2 } + \dots + \frac{ \partial^2 }{ \partial \theta_d^2 }$. In order to get the best variance reduction, we simply have to optimize the coefficients of the polynomial $Q(.)$. In practice, first or second degree polynomials $Q(\theta)$ often provide good variance reduction \citep{Mira2013}. For the rest of this section we focus on first degree polynomials, so $Q(\theta) = \mb a^T \theta$, but the ideas are easily extendable to higher orders \citep{Papamarkou2014}.

The SGLD algorithm only calculates an unbiased estimate of $\nabla f(\theta)$, so we propose replacing $h(\theta)$ with the unbiased estimate
\begin{equation}
    \hat h (\theta) = \Delta Q(\theta) + \nabla Q(\theta) \cdot \mb{ \hat z }, 
\end{equation}
where $\mb{\hat z} = \nabla \hat f(\theta) / 2$. By identical reasoning to \citet{Friel2016}, $\hat h( \theta)$ is a valid control variate. Note that $\hat{ \mb z }$ can use any unbiased estimate, and as we will show later, the better the gradient estimate, the better this control variate performs. 

We set $Q(\theta)$ to be a linear polynomial $\mb a^T \theta$, so our SGLD-ZV estimate will take the following form
\begin{equation}
    \hat g(\theta) = g(\theta) + \mb a^T \mb{ \hat z}.
    \label{eq:zv-cv}
\end{equation}
Similar to standard control variates \citep{Ripley2009}, we need to find optimal coefficients $\mb{ \hat a }$ in order to minimize the variance of $\tilde g(\cdot)$, defined in \eqref{eq:zv-cv}. In our case, the optimal coefficients take the following form \citep{Friel2016}
\[
    \hat{ \mb a} = \V^{-1} \left( \mb{ \hat z } \right) \C\left( \mb{ \hat z }, g(\theta) \right).
\]
This means that SGLD already calculates all the necessary terms for these control variates to be applied for free. So the post-processing step can simply be applied once when the SGLD algorithm has finished, provided the full output plus gradient estimates are stored. With this in place, we can write down the full algorithm in the linear case, which is given in Algorithm \ref{alg:SGLD-ZV}. For higher order polynomials, the calculations are much the same, but more coefficients need to be estimated \citep{Papamarkou2014}.

\begin{algorithm}[t]
    \caption{SGLD-ZV}
    \begin{algorithmic}
        \Require $\{ \theta_k, \nabla \hat f(\theta_k)  \}_{k=1}^K$ \Comment{SGLD output}
        \State Set $\mb z_k \leftarrow  \frac 1 2 \nabla \hat f(\theta_k) $
        \State Estimate $V_{\mb z} \leftarrow \V(\mb z)$, $C_{g,\mb z} \leftarrow \text{Cov}(g(\theta), \mb z)$
        \State $\hat{ \mb a}_{j} \leftarrow \left[ V_{\mb z} \right]^{-1} C_{g,\mb z}$
        \For{$k = 1 \dots K$}
        \State $\hat g(\theta_k) \leftarrow g(\theta_k) + \hat{\mb a}^T \mb z_k$
        \EndFor
    \end{algorithmic}
    \label{alg:SGLD-ZV}
\end{algorithm}

The efficiency of ZV control variates in reducing the variance of our MCMC sample is directly affected by using an estimate of the gradient rather than the truth. For the remainder of this section, we investigate how the choice of the gradient estimate, and the minibatch size $n$, affects the variance reduction.

\begin{ass}
    \label{ass:bounded}
    $\V[ \phi(\theta)] < \infty$ and $\V[ \hat \psi (\theta)] < \infty$. $\E_{\theta | \mb x} \norm{\nabla f_i(\theta)}^2$ is bounded by some constant $\sigma$ for all $i = 0, \dots N$, $\theta \in \mathbb R^d$.
\end{ass}

\begin{thm}
    Under Assumption \ref{ass:bounded},
    define the optimal variance reduction for ZV control variates using the full gradient estimate to be $R$, and the optimal variance reduction using SGLD gradient estimates to be $\hat R$. Then we have that
    \begin{equation}
        \hat R \geq \frac{ R }{ 1 + [\sigma (N + 1)]^{- 1} \E_{\theta | \mb x}[ \E_S \norm{ \xi_S(\theta) }^2 ]},
        \label{eq:var-reduc}
    \end{equation}
    where $\xi_S(\theta)$ is the noise in the log-posterior gradient estimate.
\label{thm:var-reduc}
\end{thm}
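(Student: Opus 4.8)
The plan is to reduce the claim to a comparison of two quadratic forms that differ only in the matrix sandwiched between the covariance vectors, and then to control that difference by a matrix-monotonicity argument, as in the analysis of \citet{Friel2016}. First I would write the SGLD gradient as $\nabla\hat f(\theta)=\nabla f(\theta)+\xi_S(\theta)$, so that $\hat{\mb z}=\mb z+\tfrac12\xi_S(\theta)$ with $\mb z=\tfrac12\nabla f(\theta)$. Because the minibatch estimate is unbiased, $\E_S[\xi_S(\theta)\mid\theta]=0$, and since $g(\theta)$ and $\mb z$ are functions of $\theta$ alone, the tower rule gives $\C(\tfrac12\xi_S,g(\theta))=0$ and $\C(\mb z,\tfrac12\xi_S)=0$. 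Hence the ``data'' in the two control-variate problems coincides and only the covariance of the control variate changes:
\[
\C(\hat{\mb z},g(\theta))=\C(\mb z,g(\theta)),\qquad \V(\hat{\mb z})=\V(\mb z)+\Xi,\quad \Xi:=\tfrac14\,\E[\xi_S\xi_S^\top]\succeq 0,
\]
with $\operatorname{tr}(\Xi)=\tfrac14\,\E_{\theta\mid\mb x}\E_S\norm{\xi_S(\theta)}^2$. Substituting the optimal coefficients $\hat{\mb a}=\V^{-1}(\hat{\mb z})\C(\hat{\mb z},g)$ into the variance of \eqref{eq:zv-cv} shows that the optimal variance reduction equals the quadratic form $\hat R=\C(g,\mb z)^\top\V(\hat{\mb z})^{-1}\C(\mb z,g)$, and likewise $R=\C(g,\mb z)^\top\V(\mb z)^{-1}\C(\mb z,g)$ for the exact gradient.

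Next I would lower-bound $\hat R$. Rather than manipulating $\V(\hat{\mb z})^{-1}$ directly, a clean route is to note that $\hat R$, being the \emph{optimal} reduction, dominates the reduction obtained from the admissible but generally suboptimal coefficient $t\,\V(\mb z)^{-1}\C(\mb z,g)$; optimising the resulting scalar quadratic in $t$ gives
\[
\hat R\;\ge\;\frac{R^2}{\C(g,\mb z)^\top\V(\mb z)^{-1}\,\V(\hat{\mb z})\,\V(\mb z)^{-1}\C(\mb z,g)}.
\]
Since $\V(\hat{\mb z})=\V(\mb z)+\Xi$, the denominator equals $R+w^\top\Xi w$ with $w=\V(\mb z)^{-1}\C(\mb z,g)$, and $w^\top\Xi w\le\lambda\,w^\top\V(\mb z)w=\lambda R$, where $\lambda$ is the largest eigenvalue of $\V(\mb z)^{-1/2}\Xi\,\V(\mb z)^{-1/2}$. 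Therefore $\hat R\ge R/(1+\lambda)$. (Equivalently one can appeal to operator monotonicity of $A\mapsto A^{-1}$ to get $\V(\hat{\mb z})^{-1}\succeq(1+\lambda)^{-1}\V(\mb z)^{-1}$ directly.)

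Finally I would bound $\lambda$ using Assumption \ref{ass:bounded}. Since $\Xi\preceq\operatorname{tr}(\Xi)I$ we have $\lambda\le\operatorname{tr}(\Xi)/\lambda_{\min}(\V(\mb z))$; the numerator is $\tfrac14\,\E_{\theta\mid\mb x}\E_S\norm{\xi_S(\theta)}^2$, while $\lambda_{\min}(\V(\mb z))=\tfrac14\lambda_{\min}(\V(\nabla f))$ has to be controlled from below on the $\sigma(N+1)$ scale, using the per-datapoint bound $\E_{\theta\mid\mb x}\norm{\nabla f_i(\theta)}^2\le\sigma$ together with the $O(N)$ growth of the posterior's log-concavity constants. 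Substituting gives $\lambda\le[\sigma(N+1)]^{-1}\E_{\theta\mid\mb x}\E_S\norm{\xi_S(\theta)}^2$ and hence \eqref{eq:var-reduc}.

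I expect the main obstacle to be exactly this last step: the identities of the first paragraph and the matrix-monotonicity bound of the second are routine, but passing from the crude bound $\Xi\preceq\operatorname{tr}(\Xi)I$ to a useful bound on $\lambda$ requires a matching \emph{lower} bound on the dispersion of the exact gradient $\nabla f$ under the posterior, i.e.\ control of $\lambda_{\min}(\V(\nabla f))$, and it is there that the strong-log-concavity scaling and Assumption \ref{ass:bounded} must be used with care (a sharper argument may avoid the $\operatorname{tr}$ bound by working with $\V(\mb z)^{-1/2}\Xi\,\V(\mb z)^{-1/2}$ directly). A secondary point to verify is that $\xi_S(\theta)$ is genuinely uncorrelated with functions of $\theta$ in the precise probabilistic model used for the post-processing step.
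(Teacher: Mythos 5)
Your first two paragraphs are sound and in fact give a sharper, matrix-level version of what the paper does: the paper's proof works with the scalar formula $R=\left(\E_{\theta|\mb x}[g(\theta)h(\theta)]\right)^2/\E_{\theta|\mb x}[h(\theta)]^2$ from Theorem 6.1 of \citet{Mira2013}, decomposes $\E_{\theta|\mb x}[\hat h(\theta)]^2=\E_{\theta|\mb x}[h(\theta)]^2+\tfrac14\E_{\theta|\mb x}[\mb a\cdot\xi_S(\theta)]^2$ using the same unbiasedness/tower argument you use to obtain $\C(\hat{\mb z},g)=\C(\mb z,g)$ and $\V(\hat{\mb z})=\V(\mb z)+\Xi$, and then controls the resulting ratio with two Cauchy--Schwarz lemmas. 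Your route via the two quadratic forms and the admissible-but-suboptimal coefficient $t\,\V(\mb z)^{-1}\C(\mb z,g)$ is legitimate and arguably cleaner, since it compares the two problems at their respective optima rather than at a fixed coefficient vector $\mb a$.

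The genuine gap is exactly where you suspect it, and it cannot be closed ``with care'' from the stated hypotheses. To reach \eqref{eq:var-reduc} you need $\lambda\le[\sigma(N+1)]^{-1}\E_{\theta|\mb x}\E_S\norm{\xi_S(\theta)}^2$, which after your $\Xi\preceq\operatorname{tr}(\Xi)I$ step requires the \emph{lower} bound $\lambda_{\min}(\V(\mb z))\ge\sigma(N+1)/4$, i.e.\ a lower bound on the dispersion of $\nabla f(\theta)$ under the posterior at scale $\sigma(N+1)$. Assumption \ref{ass:bounded} supplies only the upper bound $\E_{\theta|\mb x}\norm{\nabla f_i(\theta)}^2\le\sigma$, which via Cauchy--Schwarz yields $\E_{\theta|\mb x}[h(\theta)]^2\le\tfrac14\sigma(N+1)\sum_i a_i^2$ --- the reverse of what is needed --- and strong log-concavity of $f$ does not force $\V(\nabla f)$ to be large in the required direction. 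So an additional hypothesis (or a different normalisation of the bound) is needed; as written, your proof is incomplete at its final step. It is worth noting that the paper's own proof reaches the identical juncture: to upper-bound the ratio $\tfrac14\E_{\theta|\mb x}[\mb a\cdot\xi_S(\theta)]^2/\E_{\theta|\mb x}[h(\theta)]^2$ it needs $\E_{\theta|\mb x}[h(\theta)]^2$ bounded from \emph{below} by $\tfrac14\sigma(N+1)\sum_i a_i^2$, yet it invokes Lemma \ref{lem:bottom}, which proves the opposite inequality. You have therefore correctly located the crux of the argument, but neither your proposal nor the appendix actually resolves it.
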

The proof of this result is given in the Appendix. An important consequence of Theorem \ref{thm:var-reduc} is that if we use the standard SGLD gradient estimate, then the denominator of \eqref{eq:var-reduc} is $O(n/N)$, so our variance reduction diminishes as $N$ gets large. However, if we use the SGLD-CV estimate instead (the same probably holds for other control variate algorithms such as SAGA), then under standard asymptotics, the denominator of \eqref{eq:var-reduc} is $O(n)$, so the variance reduction does not diminish with increasing dataset size. It follows that for best results, we recommend using the ZV post-processing step after running the SGLD-CV algorithm, especially for large $N$. The ZV post-processing step can be immediately applied in exactly the same way to other stochastic gradient MCMC algorithms, such as SGHMC and SGNHT \citep{{Chen2014,Ding2014}}.

It is worth noting that there are some storage constraints for SGLD-ZV. This algorithm requires storing the full MCMC chain, as well as the gradient estimates at each iteration. So the storage cost is twice the storage cost of a standard SGMCMC run. However, in some high dimensional cases, the required SGMCMC test statistic is estimated on the fly using the most recent output of the chain and thus reducing the storage costs. We suggest that if the dimensionality is not too high, then the additional storage cost of recording the gradients to apply the ZV post-processing step can offer significant variance reduction for free. However, for very high dimensional parameters, the cost associated with storing the gradients may preclude the use of the ZV step.

\section{Experiments}
\label{sec:experiments}

\subsection{Logistic regression}
\label{sec:lreg}

\begin{figure}[t]
    \centering
    \includegraphics[width=300px]{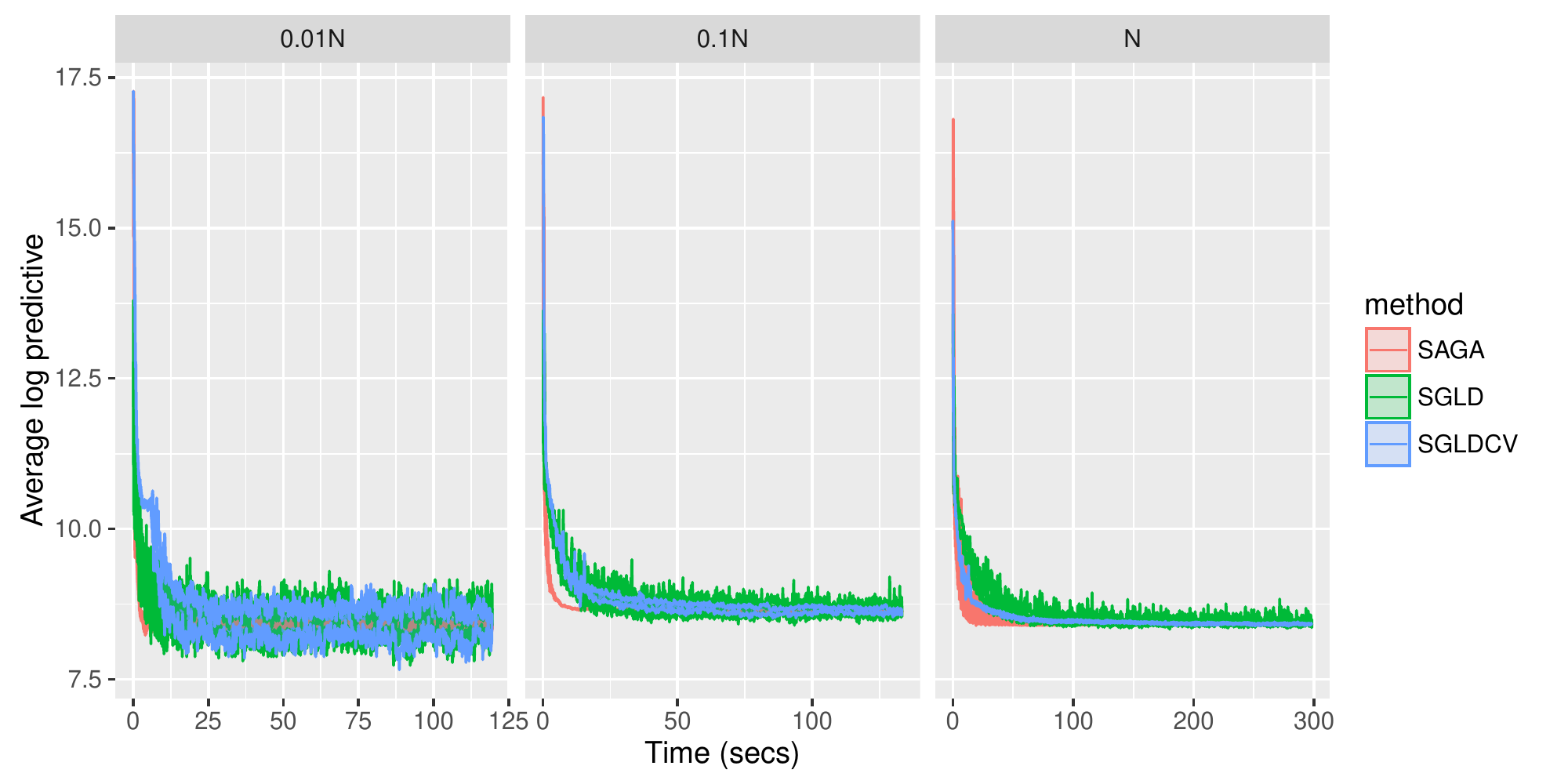}
    \caption{Log predictive density over a test set every 10 iterations of SGLD, SGLD-CV and SAGA fit to a logistic regression model as the data size $N$ is varied.}
    \label{fig:logistic-cv}
\end{figure}

\begin{figure}[t]
    \centering
    \includegraphics[width=300px]{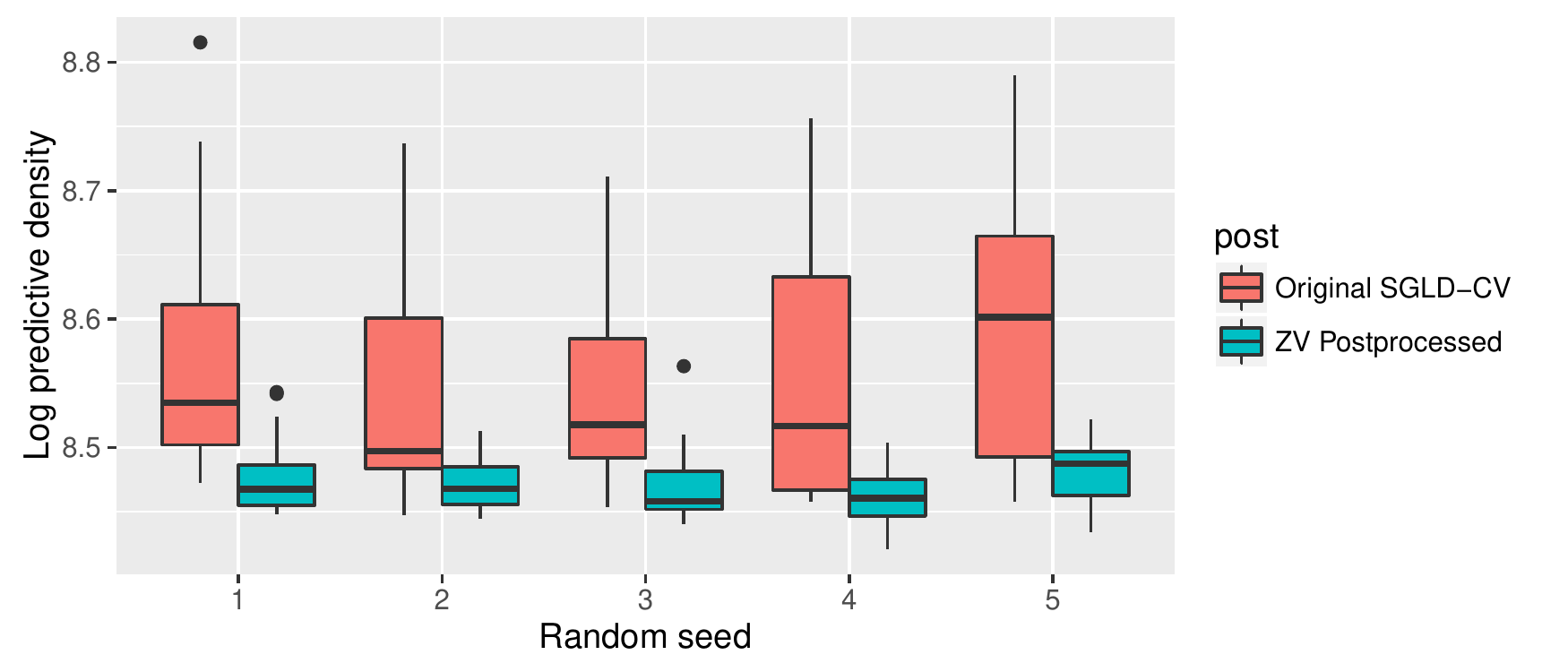}
    \caption{Plots of the log predictive density of an SGLD-CV chain when ZV post-processing is applied versus when it is not, over 5 random runs. Logistic regression model on the cover type dataset \citep{Blackard1999}.}
    \label{fig:logistic-zv}
\end{figure}

We examine our approaches on a Bayesian logistic regression problem. The probability of the $i^{th}$ output $y_i \in \{ -1, +1 \}$ is given by
\[
    p( y_i | x_i, \beta ) = \frac{1}{1 + \exp(-y_i \beta^{T} x_i)}.
\]
We use a Laplace prior for $\beta$ with scale 1.

We used the cover type dataset \citep{Blackard1999}, which has 581012 observations, which we split into a training and test set. First we run SGLD, SGLD-CV and SAGA on the dataset, all with minibatch size 500. The method SAGA was discussed at the end of Section \ref{sec:contr-vari-grad}. To empirically support the scalability results of Theorem \ref{thm:scaling}, we fit the model 3 times. In each fit, the dataset size is varied, from about 1\% of the full dataset to the full dataset size $N$. The performance is measured by calculating the log predictive density on a held-out test set every 10 iterations. Some of our examples are high dimensional, so our performance measure aims to reduce dimensionality while still capturing important quantities such as the variance of the chain. We include the burn-in of SGLD and SAGA, to contrast with the optimisation step required for SGLD-CV which is included in the total computational time.

The results are plotted against time in Figure \ref{fig:logistic-cv}. The results illustrate the efficiency gains of SGLD-CV over SGLD as the dataset size increases, as expected from Theorem \ref{thm:scaling}. SAGA outperforms SGLD-CV in this example because SGLD converges quickly in this simple setting. In the more complicated examples to follow, we show that SAGA can be slow to converge.

We also compare the log predictive density over a test set for SGLD-CV with and without ZV post-processing, averaged over 5 runs at different seeds. We apply the method to SGLD-CV rather than SGLD due to the favourable scaling results as discussed after Theorem \ref{thm:var-reduc}. Results are given in Figure \ref{fig:logistic-zv}. The plot shows box-plots of the log predictive density of the SGLD sample before and after post-processing using ZV control variates. The plots show excellent variance reduction of the chain.

\subsection{Probabilistic matrix factorization}
\label{sec:matfac}

\begin{figure}[t]
    \centering
    \includegraphics[width=300px]{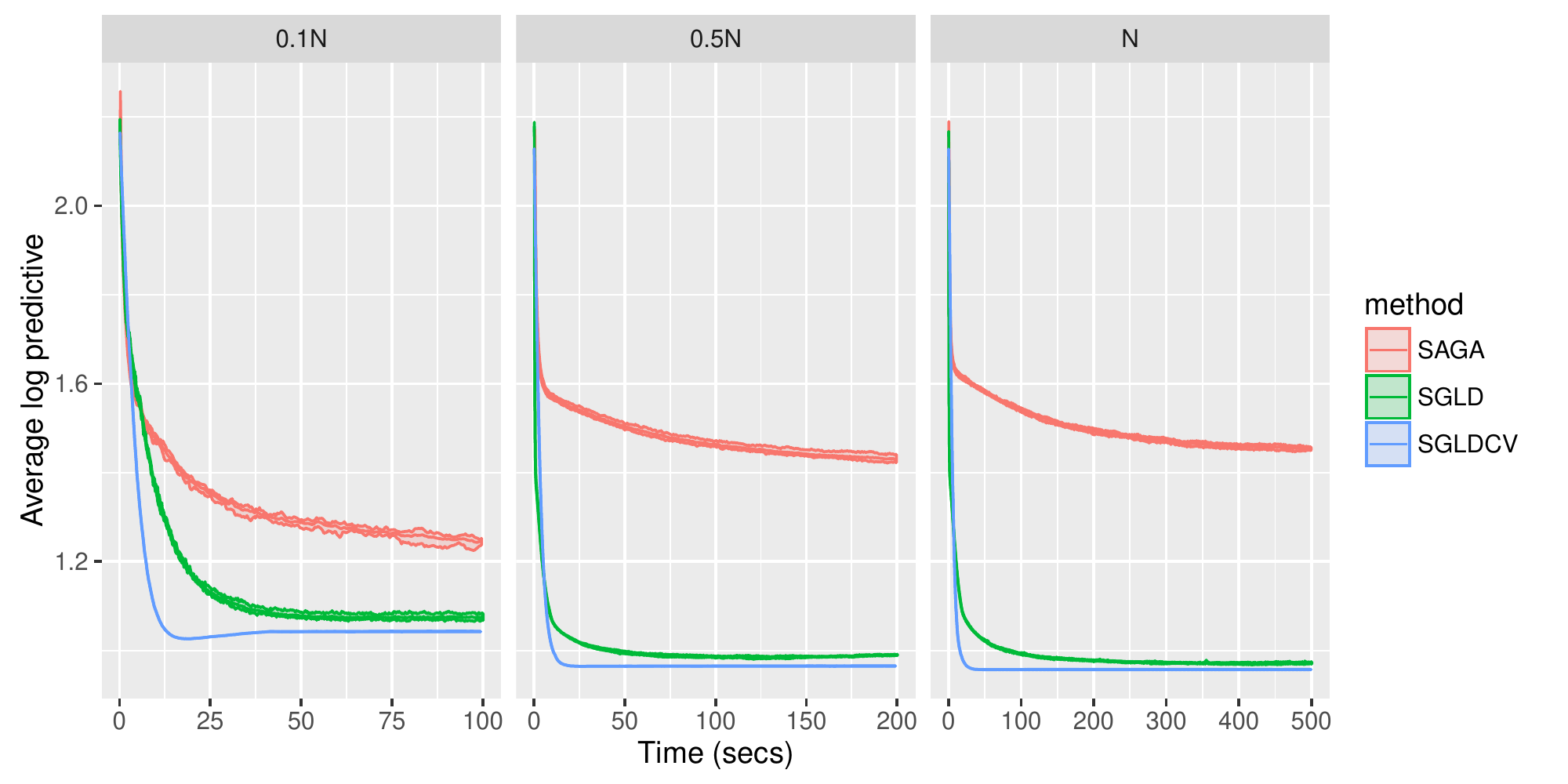}
    \caption{Log predictive density over a test set of SGLD, SGLD-CV and SAGA fit to a Bayesian probabilistic matrix factorization model as the number of users is varied, averaged over 5 runs. We used the Movielens ml-100k dataset.}
    \label{fig:matFac-cv}
\end{figure}

\begin{figure}[t]
    \centering
    \includegraphics[width=300px]{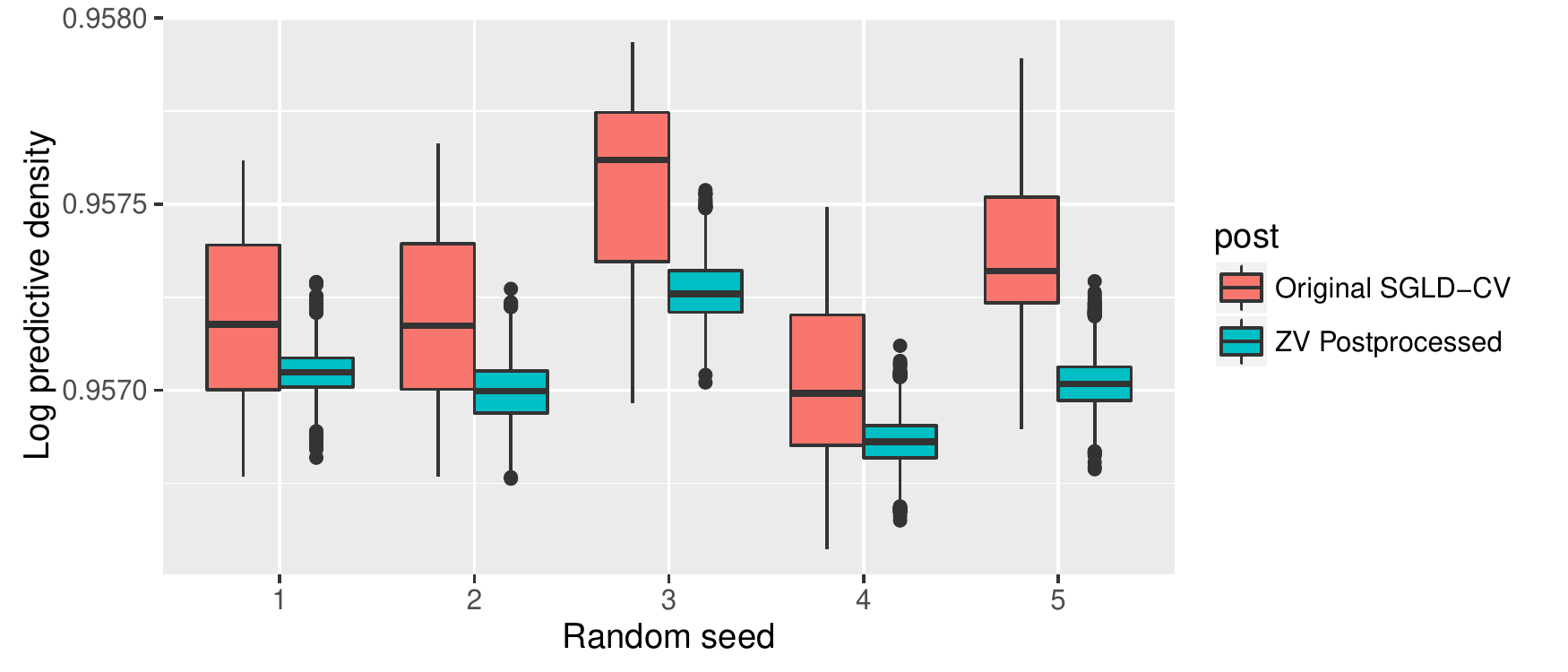}
    \caption{Plots of the log predictive density of an SGLD-CV chain when ZV post-processing is applied versus when it is not, over 5 random runs. SGLD-CV algorithm applied to a Bayesian probablistic matrix factorization problem using the Movielens ml-100k dataset.}
    \label{fig:matfac-zv}
\end{figure}

A common recommendation system task is to predict a user's rating of a set of items, given previous ratings and the ratings of other users. The end goal is to recommend new items that the user will rate highly. Probabilistic matrix factorization (PMF) is a popular method to train these models \citep{Mnih2008}. As the matrix of ratings is sparse, over-fitting is a common issue in these systems, and Bayesian approaches are a way to account for this \citep{Ahn2015}.

In this experiment, we apply SGLD, SGLD-CV and SAGA to a Bayesian PMF problem, using a model similar to \citet{Ahn2015} and \citet{Chen2014}. We use the Movielens dataset ml-100k\footnote{\label{ml-100k}https://grouplens.org/datasets/movielens/100k/}, which contains 100,000 ratings from almost 1,000 users and 1,700 movies. We use batch sizes of 5,000, with a larger minibatch size chosen due to the high-dimensional parameter space. As before, we compare performance by calculating the predictive distribution on a held out dataset every 10 iterations.

We investigate the scaling results of SGLD-CV and SAGA versus SGLD by varying the dataset size. We do this by limiting the number of users in the dataset, ranging from 100 users to the full 943. The results are given in Figure \ref{fig:matFac-cv}. Once again the scaling improvements of SGLD-CV as the dataset size increases are clear. 

In this example SAGA converges slowly in comparison even to SGLD. In fact the algorithm converges slowly in all our more complex experiments. The problem is particularly bad for large $N$. This is likely a result of the starting point for SAGA being far from the posterior mode. Empirically, we found that the gradient direction and magnitude can update very slowly in these cases. This is not an issue for simpler examples such as logistic regression, but for more complex examples we believe it could be a sign that the algorithm is getting stuck in, or moving slowly through, local modes where the gradient is comparatively flatter. The problem appears to be made worse for large $N$ when it takes longer to update $g_\alpha$. This is an example where the optimisation step of SGLD-CV is an advantage, as the algorithm is immediately started close to the posterior mode and so the efficiency gains are quickly noted. This issue with SAGA could be related to the starting point condition for SGLD-CV as detailed in Corollary \ref{thm:scaling}. Due to the form of the Wasserstein bound, it is likely that SAGA would have a similar starting point condition.

Once again we compare the log predictive density over a test set for SGLD-CV with and without ZV post-processing when applied to the Bayesian PMF problem, averaged over 5 runs at different seeds. Results are given in Figure \ref{fig:logistic-zv}. The plot shows box-plots of the log predictive density of the SGLD sample before and after post-processing using ZV control variates. The plots show excellent variance reduction of the chain.

\subsection{Latent Dirichlet allocation}
\label{sec:lda}

\begin{figure}[t]
    \centering
    \includegraphics[width=300px]{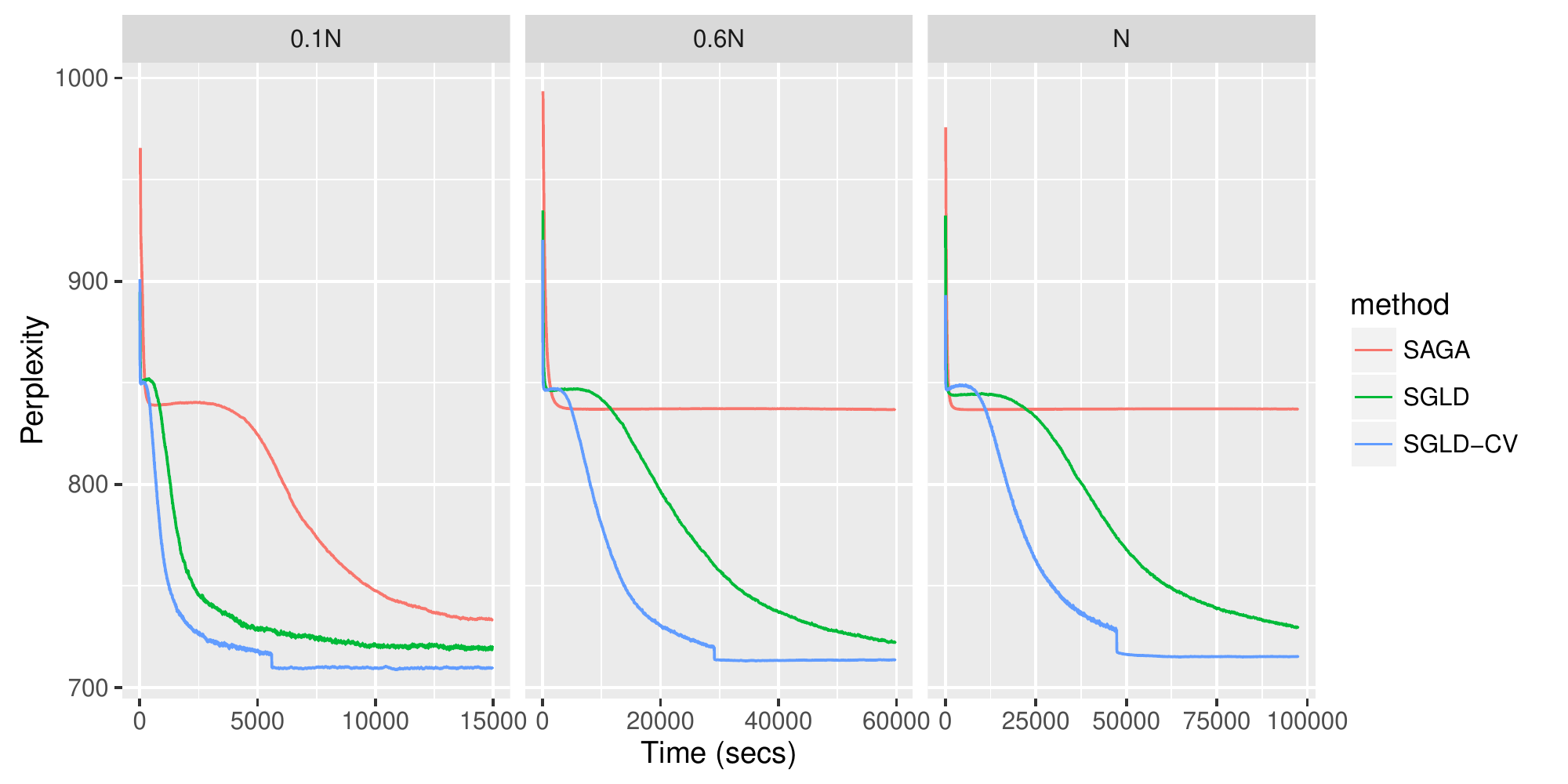}
    \caption{Perplexity of SGLD and SGLD-CV fit to an LDA model as the data size $N$ is varied, averaged over 5 runs. The dataset consists of scraped Wikipedia articles.}
    \label{fig:lda}
\end{figure}

Latent Dirichlet allocation (LDA) is an example of a topic model used to describe collections of documents by sets of discovered topics \citep{Blei2003}. The input consists of a matrix of word frequencies in each document, which is very sparse, motivating the use of a Bayesian approach to avoid over-fitting.

Due to storage constraints, it was not feasible to apply SGLD-ZV to this problem, so we focus on SGLD-CV. We scraped approximately 80,000 documents from Wikipedia, and used the 1,000 most common words to form our document-word matrix input. We used a similar formulation to \citet{Patterson2013}, though we did not use a Riemannian sampler. 

Once again in our comparison of SGLD, SGLD-CV and SAGA, we vary the dataset size, this time by changing the number of documents used in fitting the model, from 10,000 to the full 81,538. We use batch sizes of 50 documents. We measure the performance of LDA using the \emph{perplexity} on held out words from each document, a standard performance measure for this model. The results are given in Figure \ref{fig:lda}. Here the scalability improvements of using SGLD-CV over SGLD are clear as the dataset size increases. This time the batch size is small compared to the dataset size, which probably makes the scalability improvements more obvious. The sudden drop in perplexity for the SGLD-CV plot occurs at the switch from the stochastic optimization step to SGLD-CV. This is likely a result of the algorithm making efficient use of the Gibbs step to simulate the latent topics.

An interesting aspect of this problem is that it appears to have a pronounced local mode where each of the methods become trapped (this can be seen by the blip in the plot at a perplexity of around 850). SGLD-CV is the first to escape followed by SGLD, but SAGA takes a long time to escape. This is probably due to a similar aspect as the one discussed in the previous experiment (Section \ref{sec:matfac}). Similar to the previous experiment, we find that while SAGA seems trapped, its gradient estimate changes very little, which could be a sign that the algorithm is moving very slowly through an area with a relatively flat gradient, such as a local mode. A simple solution would be to start SAGA closer to the mode using a stochastic optimisation scheme.

\section{Discussion}

    We have used control variates for stochastic gradient MCMC to reduce the variance in the gradient estimate. We have shown that in the strongly log-concave setting, and under standard asymptotics, this proposed SGLD-CV algorithm reduces the computational cost of stochastic gradient Langevin dynamics to $O(1)$. Our theoretical results give results on the computational cost under non-standard asymptotics also, and show there should be some benefit provided distance between the centering value $\hat \theta$ and the posterior mean inversely depends on $N$. The algorithm relies on a setup cost that estimates the posterior mode which replaces the burn-in of SGLD. We have explored the cost of this step both theoretically and empirically. We have empirically supported these scalability results on a variety of interesting and challenging problems from the statistics and machine learning literature using real world datasets. The simulation study also revealed that SGLD-CV was less susceptible to getting stuck in local stationary points than an alternative method that performs variance reduction using control variates, SAGA \citep{Dubey2016}. An interesting future extension would be to reduce the startup cost of SGLD-CV, along with introducing automatic step-size tuning.

We showed that stochastic gradient MCMC methods calculate all the information needed to apply zero variance post-processing control variates. This improves the inference of the output by reducing its variance. We explored how the variance reduction is affected by the minibatch size and the gradient estimate, and show using SGLD-CV or SAGA rather than SGLD can achieve a better variance reduction. We demonstrated this variance reduction empirically. A limitation of these post-processing control variates is they require the whole chain, which can lead to high storage costs if the dimensionality of the sample space is high. Future work could explore ways to reduce the storage costs of stochastic gradient MCMC.

\section{Acknowledgements}
\label{sec:acknowledgements}

The first author gratefully acknowledges the support of the EPSRC funded EP/L015692/1 STOR-i Centre for Doctoral Training. This work was supported by EPSRC grant EP/K014463/1, ONR Grant N00014-15-1-2380 and NSF CAREER Award IIS-1350133.

\bibliography{submission}
\bibliographystyle{apa}

\newpage
\appendix
\onecolumn

\section{Computational cost proofs}

\subsection{Proof of Proposition \ref{thm:wasser-bound}}

\begin{proof}
Let $\pi$ be the invariant distribution of the underlying dynamics, so that it has density $e^{-f(\theta)} = p(\theta | \mb x)$, and define $W_2(\nu_k, \pi)$ to be the Wasserstein distance between $\nu_k$ and $\pi$. Define $\xi_{k}$ to be the SGLD-CV gradient noise term. Then we can write a single step of SGLD-CV as
\[
    \theta_{k+1} = \theta_{k} + h \nabla f(\theta_{k}) + h \xi_k + \sqrt{2h} \zeta_k,
\]

We have that $\theta_k \sim \nu_k$, and follow similarly to the proof of \citet[][Proposition 2]{Dalalyan2017}. First define $Y_0$ to be a draw from the invariant distribution $\pi$, such that the joint distribution of $Y_0$ and $\theta_k$ minimizes $\E\norm{Y_0 - \theta_k}^2$. Here $\norm{.}$ denotes the Euclidean distance for $\mathbb R^d$. It follows that $\E\norm{Y_0 - \theta_k}^2 = W_2^2(\nu_k, \pi)$.

Let $B_t$ be a $d$-dimensional Wiener process, independent of $\theta_k$, $Y_0$ and $\xi_k$ but which we couple to the injected noise $\zeta_k$ so that $B_h = \sqrt h \zeta_k$. Now let $Y_t$, $t > 0$, follow the diffusion
\begin{equation}
    Y_t = Y_0 + \int_0^t \nabla f(Y_s) ds + \sqrt 2 B_t.
\end{equation}
Let $\Delta_k = Y_0 - \theta_k$ and $\Delta_{k+1} = Y_h - \theta_{k+1}$. Since we started the process $Y_t$ from $Y_0 \sim \pi$, then it follows that $Y_t \sim \pi$ for all $t > 0$. Also since $W_2^2(\nu_{k+1}, \pi)$ minimizes the expected squared distance between two random variables with marginals $\nu_{k+1}$ and $\pi$ then it follows that $W_2^2(\nu_{k+1}, \pi) \leq \E\norm{\Delta_{k+1}}^2$.

Let us define
\begin{align}
    U &= \nabla f(\theta_k + \Delta_k) - \nabla f(\theta_k),
    \label{eq:U} \\
    V &= \int_0^h \left[ \nabla f(Y_t) - \nabla f(Y_0) \right] dt.
    \label{eq:V}
\end{align}
Then by the unbiasedness of the gradient estimation, $\xi_k$ has mean 0 regardless of the value of $\theta_k$. Thus
\begin{align*}
    \E \norm{\Delta_{k+1}}^2 &= \E \norm{\Delta_k + hU + V}^2 + \E\norm{\xi_k}^2 \\
    &\leq \left[\E\norm{\Delta_k - h U} + \E \norm{V} \right]^2 + h^2 \E\norm{\xi_k}^2.
\end{align*}
We can then apply Lemmas 2 and 4 in \cite{Dalalyan2017}, stated below in Lemmas \ref{lem:dal-1} and \ref{lem:dal-3}, as well as applying the gradient noise bound in Lemma \ref{lem:grad-noise}, to obtain a bound on $W_2^2(\nu_{k+1}, \pi)$ given $W_2^2(\nu_k, \pi)$.
\begin{lem}
    \label{lem:dal-1}
    With $U$ as defined in \eqref{eq:U}, if $h < 2m / (2M^2 + m^2)$, then $\norm{\Delta_k - hU} \leq (1 - mh) \norm{\Delta_k}$.
\end{lem}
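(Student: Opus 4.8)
The plan is to recognise that the quantity $\Delta_k - hU$ is nothing but the difference of one deterministic gradient step applied to two points: writing $T(\theta) := \theta - h\nabla f(\theta)$, and using that $\Delta_k = (\theta_k + \Delta_k) - \theta_k$ together with the definition $U = \nabla f(\theta_k + \Delta_k) - \nabla f(\theta_k)$, we have $\Delta_k - hU = T(\theta_k + \Delta_k) - T(\theta_k)$. So the claim reduces to the classical statement that, for a strongly convex objective with Lipschitz gradient and a sufficiently small stepsize, the gradient-descent map is a contraction with modulus $1 - mh$.

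To prove that, I would expand $\norm{\Delta_k - hU}^2 = \norm{\Delta_k}^2 - 2h\langle \Delta_k, U\rangle + h^2 \norm{U}^2$. The two bounds that Assumption \ref{ass:stronvex} supplies directly, $\langle \Delta_k, U\rangle \ge m\norm{\Delta_k}^2$ and $\norm{U} \le M\norm{\Delta_k}$, only yield the coefficient $1 - 2mh + M^2 h^2$, which exceeds $(1-mh)^2$ as soon as $M > m$; so a sharper tool is needed. The right tool is the co-coercivity inequality for an $m$-strongly convex function with $M$-Lipschitz gradient, $\langle \nabla f(x) - \nabla f(y), x - y\rangle \ge \frac{mM}{m+M}\norm{x-y}^2 + \frac{1}{m+M}\norm{\nabla f(x) - \nabla f(y)}^2$, applied with $x = \theta_k + \Delta_k$ and $y = \theta_k$. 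Substituting it into the expansion gives $\norm{\Delta_k - hU}^2 \le \bigl(1 - \tfrac{2hmM}{m+M}\bigr)\norm{\Delta_k}^2 + \bigl(h^2 - \tfrac{2h}{m+M}\bigr)\norm{U}^2$. Since $2M^2 + m^2 \ge m(M+m)$ (using $M \ge m$), the hypothesis $h < 2m/(2M^2+m^2)$ forces $h < 2/(m+M)$, so the coefficient of $\norm{U}^2$ is non-positive; bounding $\norm{U}^2 \ge m^2\norm{\Delta_k}^2$ (an immediate consequence of strong convexity and Cauchy--Schwarz) and collecting terms then collapses the bound to $(1 - 2mh + m^2 h^2)\norm{\Delta_k}^2 = (1-mh)^2\norm{\Delta_k}^2$. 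Finally, $h < 2m/(2M^2+m^2) \le 2/(3m)$ gives $mh < 2/3 < 1$, so $1 - mh > 0$ and taking square roots yields the claim. Equivalently, one can simply invoke Lemma 2 of \cite{Dalalyan2017} verbatim.

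The only real subtlety is the one just described: getting the sharp modulus $1 - mh$ rather than the weaker $\sqrt{1 - 2mh + M^2 h^2}$ needs the co-coercivity/interpolation inequality, not merely the two one-sided bounds from Assumption \ref{ass:stronvex}, and one has to verify that the precise stepsize threshold $2m/(2M^2+m^2)$ is small enough to make the $\norm{U}^2$ term drop out; everything after that is bookkeeping.
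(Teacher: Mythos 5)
Your argument is correct and matches the paper's: the paper simply observes that $2m/(2M^2+m^2) \le 2/(m+M)$ (since $m \le M$) and then invokes Lemma 2 of \cite{Dalalyan2017} verbatim, whose standard proof is exactly the co-coercivity computation you carry out. The only difference is that you supply that computation in full rather than citing it; all of your steps --- the reduction of the stepsize condition, the sign check on the $\norm{U}^2$ coefficient, and the lower bound $\norm{U} \ge m \norm{\Delta_k}$ --- are sound.
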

The original lemma by \cite{Dalalyan2017} assumed $h < 2 / (m + M)$, but this holds when $h < 2m / (2M^2 + m^2)$ as $m \leq M$.

\begin{lem}
    \label{lem:dal-3}
    Under Assumption \ref{ass:stronvex}. Let $V$ be as defined in \eqref{eq:V}, then
    \[
        \E \norm{V} \leq \frac 1 2 (h^4 M^3 d)^{\frac 1 2} + \frac 2 3 (2 h^3 d)^{\frac 1 2} M.
    \]
\end{lem}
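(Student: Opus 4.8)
The plan is to reduce the bound on $\E\norm{V}$ to a bound on $\E\norm{Y_t - Y_0}$, and then control the latter by splitting the increment of the diffusion into its drift and Brownian parts. First I would invoke the Lipschitz bound in Assumption \ref{ass:stronvex}: since $\norm{\nabla f(Y_t) - \nabla f(Y_0)} \le M\norm{Y_t - Y_0}$ pointwise, the triangle inequality for integrals followed by Tonelli gives
\[
\E\norm{V} \le \int_0^h \E\norm{\nabla f(Y_t) - \nabla f(Y_0)}\,dt \le M\int_0^h \E\norm{Y_t - Y_0}\,dt,
\]
so everything reduces to estimating $\E\norm{Y_t - Y_0}$ for $t\in[0,h]$.

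Next I would use the integral form $Y_t - Y_0 = \int_0^t \nabla f(Y_s)\,ds + \sqrt{2}\,B_t$ (the sign of the drift being irrelevant inside a norm) and the triangle inequality to write $\E\norm{Y_t - Y_0} \le \E\norm{\int_0^t \nabla f(Y_s)\,ds} + \sqrt{2}\,\E\norm{B_t}$. The Brownian term is immediate: $\E\norm{B_t} \le (\E\norm{B_t}^2)^{1/2} = (td)^{1/2}$. For the drift term, the key point is that $Y_0\sim\pi$ and $\pi$ is invariant for the Langevin diffusion, so $Y_s\sim\pi$ for every $s$, and hence $\E\norm{\int_0^t \nabla f(Y_s)\,ds} \le \int_0^t \E\norm{\nabla f(Y_s)}\,ds = t\,\E_\pi\norm{\nabla f}$.

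The crux is then the stationary moment bound $\E_\pi\norm{\nabla f}^2 \le Md$. I would derive it from the Stein/integration-by-parts identity $\E_\pi\norm{\nabla f}^2 = \E_\pi[\Delta f]$, valid because strong log-concavity forces $e^{-f}$ to decay at least Gaussian-fast while $\nabla f$ grows at most linearly, so the boundary terms vanish, together with $\Delta f = \mathrm{tr}(\nabla^2 f) \le Md$, which is the operator inequality $\nabla^2 f \preceq MI$ encoded by the Lipschitz condition. (If one prefers not to assume $f\in C^2$, a cruder route is $\norm{\nabla f(\theta)} = \norm{\nabla f(\theta) - \nabla f(\theta^*)} \le M\norm{\theta-\theta^*}$ combined with the posterior second-moment bound $\E_\pi\norm{\theta-\theta^*}^2 \le d/m$, which loses a factor $R = M/m$; the sharper Stein route is what reproduces the stated constants.) Thus $\E_\pi\norm{\nabla f} \le (Md)^{1/2}$, giving $\E\norm{Y_t - Y_0} \le t(Md)^{1/2} + (2td)^{1/2}$.

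Finally I would substitute this into the first display and integrate in $t$ over $[0,h]$:
\[
\E\norm{V} \le M\int_0^h\left[ t(Md)^{1/2} + (2td)^{1/2}\right]dt = \frac{1}{2}M(Md)^{1/2}h^2 + \frac{2}{3}M(2d)^{1/2}h^{3/2},
\]
which is exactly $\frac{1}{2}(h^4 M^3 d)^{1/2} + \frac{2}{3}(2h^3 d)^{1/2}M$, as claimed. The only genuinely delicate step is the moment bound $\E_\pi\norm{\nabla f}^2 \le Md$ and the justification that the integration-by-parts boundary terms vanish; the rest is triangle inequality, Jensen/Cauchy--Schwarz, Tonelli and a one-line integral, and the argument mirrors that of Proposition 2 and Lemma 4 in \cite{Dalalyan2017}.
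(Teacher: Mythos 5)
Your proof is correct, and it is worth noting that the paper itself does not prove this lemma at all: it simply imports it as Lemma 4 of \cite{Dalalyan2017}, so your derivation is a self-contained reconstruction of the cited argument rather than an alternative to anything in the paper. Every step checks out: the reduction $\E\norm{V}\le M\int_0^h \E\norm{Y_t-Y_0}\,dt$ via Lipschitzness and the triangle inequality for Bochner integrals, the split of the increment into drift and Brownian parts with $\E\norm{B_t}\le (td)^{1/2}$, the use of stationarity ($Y_s\sim\pi$ for all $s$) to reduce the drift term to $t\,\E_\pi\norm{\nabla f}$, and the Stein identity $\E_\pi\norm{\nabla f}^2=\E_\pi[\Delta f]\le Md$ (with the boundary terms killed by the Gaussian-or-faster decay of $e^{-f}$ under $m$-strong convexity and the at-most-linear growth of $\nabla f$). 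The final integration reproduces the constants $\frac12$ and $\frac23$ exactly, which confirms you have recovered the sharp version of the bound rather than the cruder $\E_\pi\norm{\theta-\theta^*}^2\le d/m$ route you mention parenthetically. You are also right that the sign of the drift is immaterial here, which conveniently absorbs the sign typo in the paper's statement of the coupled diffusion $Y_t$. The one point I would make explicit if this were to stand as a written proof is that the Stein identity requires $f\in C^2$ (or an approximation argument), which is slightly stronger than the bare Lipschitz-gradient condition in Assumption \ref{ass:stronvex}; you flag this yourself, and it is the same implicit smoothness assumption made in \cite{Dalalyan2017}.
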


Finally we can apply Lemma \ref{lem:grad-noise}, as stated in the main body, to get
\begin{align*}
    \E\norm{\xi_k}^2 
    &\leq \frac{M^2}{n} \E \norm{\theta_k - \hat \theta}^2 \\
    &\leq \frac{2M^2}{n} \E \norm{\theta_k - Y_0}^2 + \frac{2M^2}{n^2} \E \norm{Y_0 - \hat \theta}^2 \\
    &\leq \frac{2M^2}{n} W_2^2(\nu_k, \pi) + \frac{2M^2}{n} \E \norm{Y_0 - \hat \theta}^2
\end{align*}
Using Theorem 1 of \cite{Durmus2016}
\begin{equation}
    \E \norm{Y_0 - \hat \theta}^2 \leq \E\norm{\hat \theta - \bar \theta}^2 + \frac d m.
\end{equation}
It follows that
\begin{equation}
    \E\norm{\xi_k}^2 \leq \frac{2M^2}{n} W_2^2(\nu_k, \pi) + \frac{2M^2}{n} \left[ \E\norm{\hat \theta - \bar \theta}^2 + \frac d m \right].
\end{equation}

Now using that $W_2^2(\nu_{k+1}, \pi) \leq \E\norm{\Delta_{k+1}}^2$ we get the following
\[
    W_2^2(\nu_{k+1}, \pi) \leq \left[ (1 - mh) W_2(\nu_k, \pi) + \alpha M (h^3 d)^{\frac 1 2} \right]^2 + \frac{2h^2M^2}{n} W_2^2(\nu_k, \pi) + \frac{2h^2M^2}{n} \left[ \E\norm{\hat \theta - \bar \theta}^2 + \frac d m \right],
\]
where $\alpha = 7 \sqrt 2 / 6$. Gathering like terms we can further bound $W_2^2(\nu_{k+1}, \pi)$ to get the following recursive formula
\begin{align*}
    W_2^2(\nu_{k+1}, \pi) &\leq \left[ (1 - A) W_2(\nu_k, \pi) + C \right]^2 + B^2
\end{align*}
where
\begin{align*}
    A &= 1 - \sqrt{\frac{2h^2M^2}{n} + (1 - mh)^2} \\
    B &= \sqrt{\frac{2h^2M^2}{n} \left[ \E\norm{\hat \theta - \bar \theta}^2 + \frac d m \right]} \\
    C &= \alpha M (h^3 d)^{\frac 1 2}.
\end{align*}
We can now apply Lemma 1 of \cite{Dalalyan2017}, as stated below to solve this recurrence relation.

\begin{lem}
    \label{lem:recur}
    Let $A$, $B$ and $C$ be non-negative numbers such that $A \in (0,1)$. Assume that the sequence of non-negative numbers $x_k$, $k = 0, 1, \dots$, satisfies the recursive inequality
    \[
        x_k^2 \leq \left[(1-A)x_k + C\right]^2 + B^2
    \]
    for every integer $k > 0$. Then for all integers $k \geq 0$
    \[
        x_k \leq (1-A)^k x_0 + \frac C A + \frac{B^2}{C + \sqrt A B}
    \]
\end{lem}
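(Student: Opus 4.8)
The plan is to prove by induction on $k$ that $x_k \le (1-A)^k x_0 + L$, where $L := \frac{C}{A} + \frac{B^2}{C + \sqrt{A}\,B}$ is exactly the constant on the right-hand side of the claimed bound. The engine of the induction is a single ``one-step'' estimate: for every real $z \ge 0$, if a nonnegative number $x$ satisfies $x \le z + L$, then $\sqrt{\,[(1-A)x + C]^2 + B^2\,} \le (1-A)z + L$. Granting this, the result follows at once: the base case $k=0$ is the trivial inequality $x_0 \le x_0 + L$, and if $x_k \le (1-A)^k x_0 + L$ then applying the one-step estimate with $z = (1-A)^k x_0$ and $x = x_k$, together with the hypothesised recursion $x_{k+1}^2 \le [(1-A)x_k + C]^2 + B^2$ (so that $x_{k+1} \le \sqrt{[(1-A)x_k+C]^2 + B^2}$), gives $x_{k+1} \le (1-A)^{k+1} x_0 + L$.

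To prove the one-step estimate I would set $s = C + \sqrt{A}\,B$ and $\delta = \frac{A B^2}{s}$, and write $M = (1-A)z + L$ for the target bound. A short computation gives $A L = C + \delta$, hence $(1-A)L + C = L - \delta$, so $y := (1-A)x + C \le (1-A)z + (1-A)L + C = M - \delta$; note also $y \ge 0$ and, since $A<1$, $M \ge L > \delta$ so $M - \delta > 0$. Squaring the inequality $0 \le y \le M-\delta$ yields $y^2 \le M^2 - 2\delta M + \delta^2$, so it is enough to check $2\delta M \ge \delta^2 + B^2$, and since $M \ge L$ it suffices to verify $2\delta L \ge \delta^2 + B^2$. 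Clearing denominators, this is equivalent to $2Cs + 2AB^2 \ge A^2 B^2 + s^2$, and substituting $s^2 = C^2 + 2C\sqrt{A}\,B + AB^2$ reduces it to the elementary inequality $C^2 + A(1-A)B^2 \ge 0$, which holds because $A \in (0,1)$. The degenerate case $B=0$ (where $\delta = 0$) is immediate since then $\sqrt{[(1-A)x+C]^2} = (1-A)x + C \le (1-A)z + (1-A)L + C \le (1-A)z + L$.

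I expect the only real work to be the algebraic verification in the second paragraph: choosing the right auxiliary quantities $s$ and $\delta$, extracting the identity $(1-A)L + C = L - \delta$, and running the reduction of $2\delta L \ge \delta^2 + B^2$ down to $C^2 + A(1-A)B^2 \ge 0$. Everything else — the induction, the base case, and the $B=0$ corner — is routine. One could instead derive the one-step estimate from the identity $\sqrt{y^2+B^2} = y + \frac{B^2}{y + \sqrt{y^2+B^2}}$ combined with $\sqrt{y^2+B^2} \ge \sqrt{A}\,B$ (valid since $A \le 1$), which gives $\sqrt{y^2+B^2} \le y + \frac{B^2}{y + \sqrt{A}\,B}$; but this still needs essentially the same bookkeeping on $y$, so I would present the direct squaring argument above.
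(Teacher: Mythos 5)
Your proof is correct. Note first that the paper does not actually prove this lemma: it is quoted verbatim from Lemma 1 of Dalalyan (2017) and used as a black box to close the recursion in Proposition \ref{thm:wasser-bound}, so you are supplying an argument where the paper supplies none. I checked your algebra: with $s = C+\sqrt{A}B$, $\delta = AB^2/s$ and $L = C/A + B^2/s$ one indeed has $AL = C+\delta$, hence $(1-A)L + C = L-\delta$; and the reduction of $2\delta L \ge \delta^2 + B^2$ to $2Cs + 2AB^2 \ge A^2B^2 + s^2$ and then, via $s^2 = C^2 + 2C\sqrt{A}B + AB^2$, to $C^2 + A(1-A)B^2 \ge 0$ is exact. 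Combined with $M \ge L$ and $0 \le y \le M-\delta$ this gives $y^2 + B^2 \le M^2$, which is precisely the one-step contraction needed for the induction, and your separate treatment of $B=0$ covers the only place where $\delta$ could degenerate. (You have also silently corrected the index typo in the statement, reading the recursion as $x_{k+1}^2 \le [(1-A)x_k+C]^2+B^2$, which is how it is used in the paper.) Your argument differs in flavour from Dalalyan's original: his proof runs through the bound $\sqrt{y^2+B^2} \le y + B^2/(y+\sqrt{A}B)$ — essentially the alternative you sketch in your last sentence — and then sums a geometric series, whereas you identify the invariant level $L$ up front and verify by direct squaring that the map $x \mapsto \sqrt{[(1-A)x+C]^2+B^2}$ preserves the band $[0,(1-A)^kx_0+L]$. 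Your route makes the role of $L$ as a super-fixed-point transparent at the cost of the slightly opaque auxiliary quantity $\delta$; either is acceptable, and yours is fully self-contained.
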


To complete the proof all that remains is to check $A \in (0,1)$ so that Lemma \ref{lem:recur} can be applied. Clearly $A < 1$, since $n \geq 1$ we have
\[
    A \geq 1 - \sqrt{2h^2M^2 - (1 - mh)^2},
\]
and the RHS is positive when $h \in (0, 2m /(2M^2 + m^2))$.
\qed
\end{proof}

\subsection{Proof of Theorem \ref{thm:conditions}}

\begin{proof}
    Starting from Proposition \ref{thm:wasser-bound}, we have that
\begin{equation}
    W_2(\nu_K, \pi) \leq (1 - A)^K W_2(\nu_0, \pi) + \frac C A + \frac{B^2}{C + \sqrt A B}.
    \label{eq:wasser-bound}
\end{equation}
where
\begin{align*}
    A = 1 - \sqrt{\frac{2h^2M^2}{n} + (1 - mh)^2},&\qquad
    B = \sqrt{\frac{2h^2M^2}{n} \left[ \E\norm{\hat \theta - \bar \theta}^2 + \frac d m \right]}, \qquad
    C = \alpha M (h^3 d)^{\frac 1 2},
\end{align*}

Suppose we stop the algorithm at iteration $K$. Using \eqref{eq:wasser-bound}, the following are sufficient conditions that ensure $W_2^2(\nu_K, \pi) < \epsilon_0 / \sqrt m$,
\begin{align}
    (1 - A)^K W_2(\nu_0, \pi) \leq \frac{\epsilon_0}{2 \sqrt m},
    \label{eq:kh-bound} \\
    \frac C A \leq \frac{\epsilon_0}{4 \sqrt m},
    \label{eq:c-a} \\
    \frac{B^2}{C + \sqrt A B} \leq \frac{\epsilon_0}{4 \sqrt m}.
    \label{eq:b-sqrta}
\end{align}

The starting point $\theta_0$ is deterministic, so from Theorem 1 of \cite{Durmus2016}
\begin{equation}
    W_2^2(\nu_0, \pi) \leq \E\norm{\theta_0 - \bar \theta}^2 + \frac d m.
    \label{eq:degenerate}
\end{equation}

If we rewrite
\begin{equation}
    h = \frac{\gamma}{m} \left[ \frac{2n}{2R^2 + n} \right],
\end{equation}
where $\gamma \in (0,1)$ is some constant and $R := M / m$ as defined in the theorem statement, then it follows that we can write
\begin{equation}
    A = 1 - \sqrt{1 - 2mh(1 - \gamma)}.
    \label{eq:A0}
\end{equation}
Since we have the condition
\[
    h \leq \frac 1 m \left[ \frac{n}{2R^2 + n} \right],
\]
then $\gamma \leq \frac 1 2$.

Now suppose, using \eqref{eq:A0}, we set
\begin{equation}
    Kh \geq \frac 1 m \log\left[ \frac{4 m}{\epsilon_0^2} \left( \E\norm{\theta_0 - \bar \theta}_2^2 + d / m \right) \right]
    \label{eq:kh-cond}
\end{equation}
Then using the result for the deterministic starting point $\theta_0$ \eqref{eq:degenerate}, we find that \eqref{eq:kh-cond} implies that
\begin{align*}
    \frac{\epsilon_0}{2 \sqrt m} &\geq \exp\left[ - mhK / 2\right] \sqrt{ \E \norm{\theta_0 - \bar \theta}^2 + \frac d m } \\
    &\geq \left[ 1 - mh \right]^{\frac K 2} W_2(\nu_0, \pi) \\
    &\geq (1 - A)^K W_2(\nu_0, \pi),
\end{align*}
Using \eqref{eq:A0} and that our conditions imply $\gamma < 1/2$. Hence \eqref{eq:kh-bound} holds.

Using that for some real number $y \in [0,1]$, $\sqrt{1 - y} \leq 1 - y/2$, we can bound $A$ by
\begin{equation}
    A \geq 1 - \sqrt{1 - 2mh(1 - \gamma)} \geq mh(1 - \gamma) := A_0.
\end{equation}

As $\gamma \leq 1/2$, for \eqref{eq:c-a} to hold it is sufficient that
\[
\frac{\epsilon_0}{4\sqrt m} \geq \frac{C}{A_0},
\]
where $C/A_0 \geq 2\alpha M \sqrt{hd}/m$. This leads to the following sufficient condition on $h$, 
%
\begin{equation}
    h \leq \frac{1}{m} \left[ \frac{\epsilon_0^2}{64 R^2 \alpha^2 d} \right]
    \label{eq:a1-eq}
\end{equation}

Similarly for \eqref{eq:b-sqrta} it is sufficient that
\[
    \frac{\epsilon_0}{4 \sqrt m} \geq \frac{B}{\sqrt A_0}
\]
Now
\[
    \frac{B}{\sqrt A_0} \geq \frac{2 \sqrt h M \sqrt{\E \norm{\hat \theta - \bar \theta}^2 + d / m}}{\sqrt{mn}}
\]
Leading to the following sufficient condition on $n$
\[
    n \geq \frac{64 h M^2}{\epsilon_0^2} \left[ \E \norm{\hat \theta - \bar \theta}^2 + \frac d m \right].
\]
Now due to the conditions on $h$, define
\[
    \beta := \max \left\{ \frac{1}{2L^2 + 1}, \frac{\epsilon_0^2}{64 L^2 \alpha^2 d} \right\}.
\]
Then \eqref{eq:b-sqrta} will hold when
\begin{equation}
    n \geq \frac{64 L^2 \beta}{\epsilon_0^2} m \left[ \E \norm{\hat \theta - \bar \theta}^2 + \frac d m \right]
\end{equation}
\qed
\end{proof}

\subsection{Proof of Lemma \ref{lem:grad-noise}}

\begin{proof}
    Our proof follows similarly to \cite{Dubey2016},
    \begin{align*}
    \E\norm{\xi_k}^2 
    &= \E\norm{ \nabla \tilde f(\theta_k) - \nabla f(\theta_k) }^2 \\
    &= \E\norm{ \nabla f_0(\theta_k) - \nabla f_0(\hat \theta) + \frac{1}{n} \sum_{i \in S_k} \frac{1}{p_i} \left[ \nabla f_i(\theta_k) - \nabla f_i(\hat \theta) \right] - \left[ \nabla f(\theta_k) - \nabla f(\hat \theta) \right] }^2 \\
    &\leq \frac{1}{n^2} \E \sum_{i \in S_k} \norm{ \left[ \nabla f(\theta_k) - \nabla f(\hat \theta) \right] - \left( \nabla f_0(\theta_k) - \nabla f_0(\hat \theta) + \frac{1}{p_i} \left[ \nabla f_i(\theta_k) - \nabla f_i(\hat \theta) \right] \right) }^2.
\end{align*}
Where the third line follows due to independence. For any random variable $R$, we have that $\E \norm{R - \E R}^2 \leq \E\norm{R}^2$. Using this, the Lipschitz results of Assumption \ref{ass:Lipschitz} and our choice of $p_i$, gives the following, where $\E_I$ refers to expectation with respect to the sampled datum index, $I$, 
\begin{align*}
    \E\norm{\xi_k}^2 
    &\leq \frac{1}{n} \E_I\left( \frac{1}{p_I} \norm{\nabla f_I(\theta_k) - \nabla f_I(\hat \theta)}^2\right) \\
    &\leq \frac{1}{n} \sum_{i=1}^N \frac{\sum_{j=1}^N L_j}{L_i} \left( L_i\norm{\theta_k-\hat\theta}\right)^2 \\
    &= \frac{1}{n} \left\{\sum_{i=1}^N \left(\sum_{j=1}^N L_j\right) {L_i} \right\} \norm{\theta_k-\hat\theta}^2,
\end{align*}
from which the required bound follows trivially.
\qed
\end{proof}

\subsection{Proof of Lemma \ref{lem:lin-just}}

\begin{proof}
\emph{Lipschitz condition:} By the triangle inequality
\begin{align*}
    \norm{\nabla f(\theta) - \nabla f(\theta')}
    &\leq \sum_{i=0}^N \norm{\nabla f_i(\theta) - \sum_{i=0}^N f_i(\theta')} \\
    &\leq (N+1)L \norm{\theta - \theta'}.
\end{align*}

\emph{Strong convexity:} We have that
\begin{align*}
    f(\theta) - f(\theta') - \nabla f(\theta')^\top (\theta - \theta') 
        &= \sum_{i=0}^N \left[ f_i(\theta) - f_i(\theta') - \nabla f_i(\theta')^\top (\theta - \theta') \right] \\
        &\geq \frac{(N+1)l}{2} \norm{\theta - \theta'}_2^2.
\end{align*}
\qed
\end{proof}

\section{Postprocessing proofs}

\subsection{Proof of Theorem \ref{thm:var-reduc}}

\begin{proof}
We start from the bound in Theorem $6.1$ of \citet{Mira2013}, stating for some control variate $h$, the optimal variance reduction $R$ is given by
\[
    R = \frac{ \left( \E_{\theta | \mb x} \left[ g(\theta) h(\theta) \right] \right)^2 }{ \E_{ \theta | \mb x} \left[ h(\theta) \right]^2 },
\]
so that in our case we have
\begin{align*}
    \hat R &= \frac{ \left( \E_{\theta | \mb x} \left[ g(\theta) \hat h(\theta) \right] \right)^2 }{ \E_{ \theta | \mb x} \left[ \hat h(\theta) \right]^2 } \\
    &= \frac{ \left( \E_{\theta | \mb x} \left[ g(\theta) h(\theta) \right] \right)^2 }{ \E_{ \theta | \mb x} \left[ h(\theta) \right]^2 + \frac 1 4 \E_{\theta | \mb x} \left[ \mb a \cdot \xi_S(\theta) \right]^2 } \\
    &= \frac{R}{1 + \frac{ \frac 1 4 \E_{\theta | \mb x} \left[ \mb a \cdot \xi_S(\theta) \right]^2 }{ \E_{\theta | \mb x} \left[ h(\theta) \right]^2 } }.
\end{align*}
Then we can apply Lemmas \ref{lem:top}, \ref{lem:bottom}, defined in Section \ref{sec:zv-lemmas}, to get the desired result
\begin{equation}
    \hat R \geq \frac{ R }{ 1 +  [\sigma(N + 1)]^{-1} \E_{\theta | \mb x} \E_S \norm{ \xi_S(\theta) }^2 }.
\end{equation}
\qed
\end{proof}

\subsection{Lemmas}
\label{sec:zv-lemmas}

\begin{lem}
    Define $A = \sum_{i=1}^d a_i^2$, and let $\xi_S(\theta) = \widehat{ \nabla \log p(\theta|\mb x) } - \nabla \log p(\theta|\mb x)$ be the noise in the gradient estimate. Then
    \[
        \E_{\theta | \mb x} \left[ \mb a \cdot \xi_S(\theta) \right]^2 \leq A \E_{\theta | \mb x} \E_S \norm{\xi_S(\theta)}^2.
    \]
    \label{lem:top}
\end{lem}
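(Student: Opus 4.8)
The plan is to reduce the statement to a pointwise Cauchy--Schwarz inequality and then integrate. First I would fix $\theta$ and a realisation of the random minibatch $S$ used to form the gradient estimate. Viewing $\mathbf{a}$ and $\xi_S(\theta)$ as vectors in $\mathbb{R}^d$, the Cauchy--Schwarz inequality gives
\[
    \left[ \mathbf{a} \cdot \xi_S(\theta) \right]^2 \leq \norm{\mathbf{a}}^2 \norm{\xi_S(\theta)}^2 = A \norm{\xi_S(\theta)}^2,
\]
where the last equality is just the definition $A = \sum_{i=1}^d a_i^2 = \norm{\mathbf{a}}^2$. This holds for every $\theta$ and every $S$.

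The second step is to take expectations of both sides. The left-hand side of the claimed bound, $\E_{\theta|\mathbf{x}}[\mathbf{a}\cdot\xi_S(\theta)]^2$, is to be read as the expectation over $\theta$ drawn from the posterior together with the independent randomness in $S$; applying that same expectation operator to the pointwise inequality above and using linearity yields
\[
    \E_{\theta|\mathbf{x}}\left[ \mathbf{a} \cdot \xi_S(\theta) \right]^2 \leq A \, \E_{\theta|\mathbf{x}} \E_S \norm{\xi_S(\theta)}^2,
\]
which is exactly the assertion. Finiteness of the right-hand side, needed to make the interchange of expectation and the finite sum defining $\xi_S$ legitimate, follows from Assumption \ref{ass:bounded}, which bounds $\E_{\theta|\mathbf{x}}\norm{\nabla f_i(\theta)}^2$ uniformly in $i$.

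I do not expect any real obstacle here; the argument is a one-line application of Cauchy--Schwarz followed by monotonicity of expectation. The only point requiring care is bookkeeping of which source of randomness each expectation operator refers to — the posterior draw $\theta$ versus the minibatch $S$ — so that this lemma dovetails correctly with the mean-zero property of $\xi_S(\theta)$ and the variance decomposition $\E_{\theta|\mathbf{x}}[\hat h(\theta)]^2 = \E_{\theta|\mathbf{x}}[h(\theta)]^2 + \tfrac14 \E_{\theta|\mathbf{x}}[\mathbf{a}\cdot\xi_S(\theta)]^2$ used in the proof of Theorem \ref{thm:var-reduc}.
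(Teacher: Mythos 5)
Your proposal is correct and is essentially identical to the paper's proof: both apply the Cauchy--Schwarz inequality $[\mathbf{a}\cdot\xi_S(\theta)]^2 \leq \norm{\mathbf{a}}^2\norm{\xi_S(\theta)}^2$ pointwise and then take the (iterated) expectation over $\theta$ and $S$. Your remark about reading the left-hand expectation as ranging over both sources of randomness matches the paper's first displayed equality, so there is nothing to add.
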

\begin{proof}
    We can condition on the gradient noise, and then immediately apply the Cauchy-Schwarz inequality to get
    \begin{align*}
        \E_{\theta | \mb x} \left[ \mb a \cdot \xi_S(\theta) \right]^2 
        &= \E_{\theta | \mb x} \E_S \left[ \mb a \cdot \xi_S(\theta) \right]^2 \\
        & \leq \left( \sum_{i=1}^d a_i^2 \right)
        \E_{\theta | \mb x} \E_S \norm{ \xi_S(\theta) }^2
    \end{align*}
\qed
\end{proof}

\begin{lem}
    Under Assumption \ref{ass:bounded}, define $A = \sum_{i=1}^d a_i^2$. Then $\E_{\theta | x} \left[ h(\theta) \right]^2 \leq A \sigma (N + 1) / 4$.
    \label{lem:bottom}
\end{lem}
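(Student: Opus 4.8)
The plan is to recall the definition $h(\theta) = \Delta Q(\theta) + \nabla Q(\theta)\cdot \mb z$ with $Q(\theta)=\mb a^\top\theta$ and $\mb z = \nabla f(\theta)/2$ — wait, but careful: in the paper's sign convention $f = -\log p(\theta|\mb x)$, and the ZV construction of \citet{Mira2013} uses $\mb z = \nabla\log p(\theta|\mb x)/2 = -\nabla f(\theta)/2$. For a linear polynomial $Q(\theta)=\mb a^\top\theta$ we have $\nabla Q(\theta)=\mb a$ (a constant vector) and $\Delta Q(\theta)=0$, so $h(\theta) = \mb a\cdot \mb z = \tfrac12\,\mb a\cdot\nabla\log p(\theta|\mb x)$. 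First I would write this out explicitly and discard the Laplacian term, reducing the claim to a bound on $\E_{\theta|\mb x}\big[\mb a\cdot\nabla\log p(\theta|\mb x)\big]^2$.

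Next I would apply Cauchy--Schwarz in $\mathbb R^d$ exactly as in Lemma \ref{lem:top}: $\big(\mb a\cdot\nabla\log p(\theta|\mb x)\big)^2 \le \big(\sum_{i=1}^d a_i^2\big)\,\norm{\nabla\log p(\theta|\mb x)}^2 = A\,\norm{\nabla f(\theta)}^2$, so that
\[
    \E_{\theta|\mb x}[h(\theta)]^2 = \tfrac14\,\E_{\theta|\mb x}\big[\mb a\cdot\nabla\log p(\theta|\mb x)\big]^2 \le \tfrac{A}{4}\,\E_{\theta|\mb x}\norm{\nabla f(\theta)}^2.
\]
Then I would bound $\E_{\theta|\mb x}\norm{\nabla f(\theta)}^2$. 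Writing $\nabla f(\theta)=\sum_{i=0}^N \nabla f_i(\theta)$ and using the triangle inequality followed by the inequality $\big(\sum_{i=0}^N b_i\big)^2 \le (N+1)\sum_{i=0}^N b_i^2$ (Cauchy--Schwarz / power-mean on $N+1$ terms) gives $\norm{\nabla f(\theta)}^2 \le (N+1)\sum_{i=0}^N \norm{\nabla f_i(\theta)}^2$. Taking expectations and invoking the bound $\E_{\theta|\mb x}\norm{\nabla f_i(\theta)}^2 \le \sigma$ from Assumption \ref{ass:bounded} for each of the $N+1$ terms yields $\E_{\theta|\mb x}\norm{\nabla f(\theta)}^2 \le (N+1)^2\sigma$. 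Hmm — that produces $A\sigma(N+1)^2/4$, a factor of $N+1$ worse than the stated $A\sigma(N+1)/4$; so instead I would only use the triangle inequality together with $\E\norm{\nabla f_i(\theta)}^2\le\sigma$ via Minkowski's inequality in $L^2$: $\big(\E\norm{\nabla f(\theta)}^2\big)^{1/2} \le \sum_{i=0}^N \big(\E\norm{\nabla f_i(\theta)}^2\big)^{1/2} \le (N+1)\sqrt\sigma$, which still gives $(N+1)^2\sigma$. The only way to land exactly on $\sigma(N+1)/4$ is if Assumption \ref{ass:bounded} is read as bounding $\E\norm{\sum_{i=0}^N\nabla f_i(\theta)}^2 = \E\norm{\nabla f(\theta)}^2$ directly by $\sigma(N+1)$ — e.g. via an independence-type or orthogonality argument on the centered score terms — so I would state the lemma's proof by simply noting $\E_{\theta|\mb x}\norm{\nabla f(\theta)}^2 \le \sigma(N+1)$ under the assumption as used, and combining with the Cauchy--Schwarz step above.

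The main obstacle is precisely reconciling the per-term bound in Assumption \ref{ass:bounded} with the claimed aggregate constant $N+1$ rather than $(N+1)^2$: a naive triangle-inequality argument is off by a factor of $N+1$, so the proof hinges on interpreting the boundedness hypothesis (or an auxiliary independence/centering argument) so that the sum of $N+1$ gradient contributions has second moment $O(N)$ rather than $O(N^2)$. Once that reading is fixed, everything else is the short Cauchy--Schwarz computation sketched above. I would close by assembling: $\E_{\theta|\mb x}[h(\theta)]^2 \le \tfrac{A}{4}\,\E_{\theta|\mb x}\norm{\nabla f(\theta)}^2 \le \tfrac{A}{4}\sigma(N+1) = A\sigma(N+1)/4$, which is the claim, and which feeds directly into the denominator of $\hat R$ in the proof of Theorem \ref{thm:var-reduc}.
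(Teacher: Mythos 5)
Your approach is the same as the paper's: the proof given there is exactly the two-step computation you describe, namely Cauchy--Schwarz to get $\E_{\theta | \mb x}\left[ h(\theta) \right]^2 \leq \frac{1}{4}\left(\sum_{i=1}^d a_i^2\right)\E_{\theta|\mb x}\norm{\nabla f(\theta)}^2$, followed immediately by the assertion $\E_{\theta|\mb x}\norm{\nabla f(\theta)}^2 \leq (N+1)\sigma$ with no intermediate justification. (Your preliminary worry about the sign of $\mb z$ is immaterial since only the square enters, and you are right that the Laplacian term vanishes for linear $Q$.)

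The difficulty you flag in the second step is genuine, and the paper does not resolve it. As you observe, the literal reading of Assumption \ref{ass:bounded} --- a bound of $\sigma$ on each $\E_{\theta|\mb x}\norm{\nabla f_i(\theta)}^2$ separately --- combined with Minkowski's inequality or the power-mean inequality on the sum $\nabla f = \sum_{i=0}^N \nabla f_i$ only yields $(N+1)^2\sigma$, and there is no orthogonality argument available in general: although the full score $\nabla \log p(\theta|\mb x)$ has posterior mean zero under regularity conditions, the individual terms $\nabla f_i(\theta)$ are neither mean-zero nor uncorrelated under the posterior. So either the constant in the lemma should read $(N+1)^2$ (which would then propagate into the denominator of the bound in Theorem \ref{thm:var-reduc}), or the assumption must be read as directly controlling the aggregate quantity $\E_{\theta|\mb x}\norm{\nabla f(\theta)}^2/(N+1)$, which is the reading you propose as the only way to land on the stated constant. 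Your write-up is, if anything, more careful than the paper's own proof on precisely this point; the Cauchy--Schwarz half of your argument matches the paper's verbatim.
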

\begin{proof}
Applying the Cauchy-Schwarz inequality
    \begin{align*}
        \E_{\theta | x} \left[ h(\theta) \right]^2 & \leq \frac{1}{4} \left( \sum_{i=1}^d a_i^2 \right) \E_{\theta | \mb x} \norm{ \nabla f( \theta ) }^2 \\
        &\leq \frac{A (N + 1)}{4} \sigma
    \end{align*}
\qed
\end{proof}

\end{document}